\documentclass[twocolumn]{autart}    % Enable this line and disable the 
\usepackage{graphicx}          % Include this line if your 
\usepackage{graphics} % for pdf, bitmapped graphics files
\usepackage{epsfig} % for postscript graphics files
\usepackage{times} % assumes new font selection scheme installed
\usepackage{amsmath} % assumes amsmath package installed
\usepackage{amssymb}  % assumes amsmath package installed
\usepackage{mathtools}
\usepackage[short, nocomma]{optidef}
\usepackage{tensor}
\usepackage{tikz}
\usetikzlibrary{decorations.pathreplacing,calc}
\usepackage{pgfplots}
\pgfplotsset{compat=newest}
\usepackage{url}

\usepackage{tikz}
\usetikzlibrary{calc,patterns,angles,quotes}
\usetikzlibrary{spy}
\usepackage{colortbl}%

\usepackage{eqparbox}
\usepackage{svg}
\usepackage{makecell}
\usepackage{boldline}
\usepackage{cite}
\usepackage{xcolor}
\usepackage{algorithm}
\usepackage{algpseudocode}
\usepackage{graphicx}
\usepackage{bm}
\usepackage{siunitx}
\usepackage{pgfplots}
\usepackage{pgfplotstable}
\usetikzlibrary{backgrounds}

\newcommand{\rev}[1]{#1}

\newcommand{\iatgroupmail}{%
  {%\ttfamily
  \begin{tabular}[t]{@{}l@{}}
    \{hendrik.alsmeier;~maik.pfefferkorn;\\
    ~rolf.findeisen\}@iat.tu-darmstadt.de
  \end{tabular}}%
}

\mathtoolsset{showonlyrefs} % Comment out to get manual numbering.

\begin{document}

\begin{frontmatter}

\title{Certified Neural Approximation of Robust Nonlinear Model Predictive Control}

\thanks{Email: \iatgroupmail}

\author[Darmstadt]{Hendrik Alsmeier},\quad
\author[Darmstadt]{Maik Pfefferkorn},\quad
\author[Darmstadt]{Rolf Findeisen*}

\address[Darmstadt]{Control and Cyber-Physical Systems Laboratory, Technical University of Darmstadt, Landgraf-Georg-Str. 4,
64283 Darmstadt, Germany}

%%%%%%%%%%%%%%%%%%%%%%%%%%%%%%%%%%%%%%%%%%%%%%%%%%%%%%%%%%%%%%%%%%%%%%%%%%%%%%%%
\begin{abstract}
Model predictive control is a high-performance control method, but it incurs comparatively high online computational costs across many control tasks and is challenging to realize in low-level hardware settings. To address this, we present a framework for generating neural network controllers that provably inherit the safety and stability guarantees of a robust model predictive policy. In contrast to approaches based on post-verification or additional online safety filters, our method provides constructive, a priori design conditions that ensure preservation of recursive feasibility and stability before deployment. We treat the approximation error as a bounded input disturbance, which is addressed in the MPC design. This allows the neural network controller to directly inherit robustness from the underlying model predictive controller, which enables the transfer of guarantees to the approximation. The design process is an offline, \rev{constructive} synthesis pipeline and can be implemented once Lipschitz bounds are specified. We further demonstrate in simulation that the learned controller reproduces the closed-loop behavior of the original controller while reducing online computation.
\end{abstract}

\end{frontmatter}
%%%%%%%%%%%%%%%%%%%%%%%%%%%%%%%%%%%%%%%%%%%%%%%%%%%%%%%%%%%%%%%%%%%%%%%%%%%%%%%%
\section{Introduction}

Model predictive control (MPC) has been well-established as a reliable feedback control approach that is based on solving an underlying optimal control problem repeatedly and online in a receding horizon fashion \cite{rawlings2017,gruene2016,lucia2016}.
As such, it can handle a variety of systems, can include input, state, and output constraints, and ensure near-optimal closed-loop behavior with respect to a given control objective. These properties have led to its widespread adoption in industry and research.
However, the need to solve the optimal control problem online can impose a computational burden, posing implementation challenges in resource-constrained environments, such as embedded hardware, due to the demanding nature of the optimization process.

To address this challenge, various approaches have been proposed. For example, in the linear–quadratic case with linear dynamics, quadratic costs, and polytopic state and input constraints, explicit MPC yields a feedback law that is piecewise affine in the state. This structural property enables explicit MPC to compute the feedback law offline in closed form, thereby avoiding the need to solve an optimization problem online \cite{alessio2009,zeilinger2011}. However, the number of regions of the piecewise affine law grows, in the worst case, exponentially with the total number of inequality constraints, which in turn increases with the horizon length and the number of constrained states and inputs. Consequently, both the offline computation and memory complexity become rapidly intractable as the problem size grows.

\rev{This motivates the development of approximation- and learning-based approaches that mimic the implicit MPC feedback law~\cite{allgoewer2018a,Trimpe2020,chen2018deepmpc,maddalena2020,karg2020efficient,paulson2020,drummond2022,lucia2018deep,lueken2023,rose2023,tokmak2025,Hose2025}. Among the available function approximators, neural networks (NNs) are particularly attractive and employed in this work due to their universal approximation capabilities~\cite{hornik1991,hornik1993} and their ability to maintain a fixed evaluation effort with growing state dimension, input dimension, and dataset sizes. These properties have motivated numerous neural imitation-learning approaches for approximating MPC feedback laws. Nevertheless, the choice of function approximator alone does not determine how closed-loop guarantees are obtained. Existing approaches can be distinguished more coherently by how the approximation error is incorporated into the control design and how compliance with a prescribed error bound is certified.}

\rev{One prominent guarantee-transfer paradigm consists of two steps. First, the MPC is designed to be robust against a prescribed approximation error, which is interpreted as a bounded disturbance acting on the system state. Second, an explicit approximation of the resulting MPC feedback law is constructed such that its error satisfies the prescribed bound. An early NN-based formulation of this approach was presented in~\cite{allgoewer2018a}, where Hoeffding's inequality is used to validate statistically whether the learned controller satisfies the admissible input-error bound. If validation fails, additional samples are generated, and the learning procedure is repeated. Building on this framework,~\cite{Trimpe2020} extends the approach to nonlinear setpoint tracking and introduces a less conservative validation criterion that jointly accounts for approximation error and model mismatch. In~\cite{alsmeier24}, statistical post-validation is replaced by a deterministic condition that combines Lipschitz properties of the MPC feedback law and the NN with the coverage of the training data.}

\rev{This disturbance-based guarantee-transfer principle is independent of the chosen function approximator. In~\cite{rose2023}, a Gaussian process is used to learn the MPC feedback law, and compliance with the required error bound is established through tailored training-data selection. Recently,~\cite{tokmak2025} employs automatic kernel interpolation to construct an explicit surrogate whose approximation error is uniformly bounded by design, under the assumption that the feedback law belongs to the associated reproducing kernel Hilbert space (RKHS) with a known norm bound. 
% Similarly,~\cite{rose2025} develops a kernel-based interpolation scheme that selectively samples the MPC law on the closed-loop reachable set and uses Monte Carlo reachability analysis to certify safety and closed-loop performance. 
Similarly,~\cite{rose2025} uses kernel interpolation on the closed-loop reachable set and Monte Carlo reachability analysis to certify safety and closed-loop performance.
% These approaches share the disturbance-based control argument of the NN methods above, differing only in controller representation and error certification.
These approaches share the disturbance-based guarantee-transfer principle of the NN methods above, but differ in the choice of approximator, the associated regularity assumptions, and the mechanism used to certify the approximation error.}

\rev{Other NN-based approaches obtain guarantees through direct error certification, structural restrictions, or online correction mechanisms. In~\cite{drummond2022}, semidefinite programming is used to bound the worst-case difference between the NN and MPC feedback laws, while~\cite{paulson2020} uses an optimization-based projection to preserve constraint satisfaction and stability. 
% Karg et al.~\cite{karg2020efficient} build on constrained NN architectures and explicit MPC approximations~\cite{chen2018deepmpc,maddalena2020} by representing explicit MPC laws through deep piecewise-affine networks with theoretical approximation guarantees. 
Building on constrained NN approximations of explicit MPC laws~\cite{chen2018deepmpc,maddalena2020}, deep piecewise-affine networks provide approximation guarantees by design~\cite{karg2020efficient}. For nonlinear MPC,~\cite{Hose2025} validates NN proposals online and uses an MPC-intrinsic fallback when needed, while~\cite{lucia2018deep,lueken2023} learn MPC laws from trajectory data, with~\cite{lueken2023} improving data efficiency through gradient-informed Sobolev training.}

\rev{Against this background, we develop a constructive framework for the certified neural approximation of robust nonlinear MPC. The framework adopts the disturbance-based viewpoint of~\cite{allgoewer2018a,Trimpe2020,rose2023,tokmak2025}, but derives a deterministic certificate from dataset coverage and Lipschitz bounds, thereby preserving closed-loop guarantees without relying on statistical validation or requiring membership of the MPC feedback law in a prescribed RKHS with a corresponding norm bound. 
% Directly building on~\cite{alsmeier24}, we relax the requirement of global Lipschitz continuity of the MPC feedback law and establish the transfer of guarantees under local regularity conditions along the closed-loop trajectory. 
We generalize the framework of~\cite{alsmeier24} to MPC feedback laws that need not be globally Lipschitz continuous, establishing guarantee transfer under local regularity conditions along the closed-loop trajectory.
To this end, we first design a robust MPC for a prescribed disturbance bound and derive conditions ensuring that a neural network trained on MPC state-input data satisfies this bound, thereby preserving safety and stability without post-verification or auxiliary safety mechanisms. We further validate these guarantees in closed-loop operation. Our main contributions are:}
\begin{enumerate}
    \item[(i)] A priori conditions are derived that link the Lipschitz properties of the MPC feedback law and the coverage of the training data to certified bounds on the approximation error.
    \item[(ii)] \rev{A constructive offline procedure is developed to obtain a neural network approximator that satisfies these bounds, enabling systematic training of the network.}
    \item[(iii)] It is shown that, under these conditions, the learned neural network controller inherits the recursive feasibility and stability properties of the robust MPC, ensuring safe and reliable closed-loop operation.
\end{enumerate}

The remainder of this article is structured as follows. In Section \ref{sec:problem}, we formulate the problem and introduce the robust MPC scheme. 
Section \ref{sec:lmpcaeb} introduces the basics of the neural networks used, details our theoretical results on the neural network approximation of the underlying MPC, and further includes an overview of the constructive procedure enabling the generation of the neural networks in line with the previously established theory.
We illustrate the effectiveness of our proposed approach in Section \ref{sec:ex} in simulation using a 4-dimensional continuously stirred tank reactor example before concluding in Section~\ref{sec:conc}.

%%%%%%%%%%%%%%%%%%%%%%%%%%%%%%%%%%%%%%%%%%%%%%%%%%%%%%%%%%%%%%%%%%%%%%%%%%%%%%%%
\section{Robust MPC as a Basis for Certified Controller Approximation}\label{sec:problem}

We first show that the neural network approximation error enters the system as a bounded input disturbance. Using Lipschitz continuity of the system dynamics, this input disturbance can be equivalently reformulated as a bounded additive state disturbance. This reformulation allows us to cast the closed-loop system in a standard robust control setting. Based on this representation, we introduce the robust nonlinear MPC scheme that forms the basis for transferring safety and stability guarantees to the neural network controller. The presentation focuses on a robust MPC formulation for additive state disturbances, which is sufficient for our subsequent analysis. We conclude the section by recalling the key closed-loop properties of the robust MPC, in particular, recursive feasibility and input-to-state stability.

\subsection{System Description and Control Task}\label{sec:proform}

We consider nonlinear discrete-time systems given by 
\begin{equation}
    x_{k+1} = f(x_k,u_k), \label{eq:systemdesc}
\end{equation}
where $x_k \in \mathbb{R}^{n_x}$ and $u_k \in \mathbb{R}^{n_u}$ describe the system state and input, respectively, and $k \in \mathbb{N}_0$ denotes the discrete time index. The transition map is given by $f: \mathbb{R}^{n_x} \times \mathbb{R}^{n_u} \rightarrow \mathbb{R}^{n_x}$ and captures the dynamics of the system.
The objective is to stabilize the system \eqref{eq:systemdesc} \rev{at the origin (w.l.o.g.)} while satisfying input and state constraints given by
\begin{equation}\label{eq:constraints}
    \forall k \in \mathbb{N}:~ u_k \in \mathcal{U} \subset \mathbb{R}^{n_u},~ x_k \in \mathcal{X} \subset \mathbb{R}^{n_x}.
\end{equation}

While model predictive control is well-suited for this task, it requires the online solution of a possibly large-scale optimization problem, typically through iterative methods. Thus, we aim to control the system \eqref{eq:systemdesc} by a low-complexity neural network approximation $\kappa_\mathrm{NN}$ of the model predictive feedback law $u = \kappa(x)$ to reduce computational demands or enable formal verification of the explicit feedback law.
However, the use of the neural network introduces an approximation error
\begin{equation}
    e\rev{(x)} = \kappa(x) - \kappa_\mathrm{NN}(x),\label{eq:approx_error}
\end{equation}
which manifests as an additive disturbance on the control input.
If this disturbance is appropriately accounted for in the MPC design, the safety and stability guarantees can be transferred to the neural network controller.

To do so, we rely on a pre-specified bound on the neural network's approximation error, which we ensure through suitable data selection and training.
That is, we assume that we know $\epsilon > 0$ \rev{such that the approximation error in~\eqref{eq:approx_error} satisfies} $\forall x \in \mathcal{X}:~ \| e \| \leq \epsilon$.\footnote{\rev{Unless otherwise specified, any vector $p$-norm with $p\geq 1$ may be used. In finite dimensions these norms are equivalent, e.g., $\|x\|_\infty \leq \|x\|_p \leq n_x^{1/p}\|x\|_\infty$, although the choice of norm may affect the conservatism of the resulting bound.}}
We then reformulate the resulting bounded input disturbance as a bounded additive disturbance acting on the state of the system \eqref{eq:systemdesc} under the feedback law $\kappa$.
This state disturbance is then addressed within a robust MPC formulation, allowing guarantees established for the closed-loop system under the robust MPC to carry over to any approximate controller that satisfies the prescribed error bound.
The crucial step is to design a neural network that provably adheres to the predefined approximation error.
To this end, we exploit Lipschitz continuity properties of both the MPC feedback law and the neural network, together with a data coverage criterion for the training data with respect to the domain $\mathcal{X}$.

We make the following assumptions on the properties of the system and constraint sets.

\begin{assum}\label{ass:system_dynamics}
    The transition map $f: \mathbb{R}^{n_x} \times \mathbb{R}^{n_u} \to \mathbb{R}^{n_x}$ is twice continuously differentiable in $x$ and $u$ on $\mathbb{R}^{n_x} \times \mathbb{R}^{n_u}$, denoted by $f \in C^2(\mathbb{R}^{n_x} \times \mathbb{R}^{n_u}, \mathbb{R}^{n_x})$.
    Moreover, it holds that $f(0, 0) = 0$.
\end{assum}

Note that the differentiability property in Assumption \ref{ass:system_dynamics} is needed to apply a known result from parametric optimization regarding the structural properties of the MPC feedback law, which constitutes the first of two key components enabling the transfer of guarantees from the robust MPC to the neural network controller. In practice, many commonly considered systems satisfy this assumption.

\begin{assum}\label{ass:constraint_sets}
    The sets $\mathcal{U}$ and $\mathcal{X}$ are compact and given by  $\mathcal{U} := \{ g_i^u(u) \leq 0 \mid i = 1, \ldots, n_c^u \}$ and $\mathcal{X} := \{ g_i^x(x) \leq 0 \mid i = 1, \ldots, n_c^x \}$ with twice continuously differentiable functions $g_i^u(u) \in C^2(\mathbb{R}^{n_u}, \mathbb{R})$ and $g_i^x(x) \in C^2(\mathbb{R}^{n_x}, \mathbb{R})$. Moreover, we have that $(0, 0) \in \mathrm{int} (\mathcal{X} \times \mathcal{U})$.
\end{assum}

As before, Assumption~\ref{ass:constraint_sets} is necessary to ensure the desired structural properties of the optimal control problem underlying the MPC feedback law.
However, most practical MPC formulations employ polytopic or ellipsoidal sets for $\mathcal{X}$ and $\mathcal{U}$, for which Assumption \ref{ass:constraint_sets} holds. % by default.

\subsection{From Input Approximation Error to State Disturbance}

Most robust MPC formulations account for state disturbances.
Therefore, we reformulate the approximation error as an equivalent state disturbance, enabling the application of a broad class of robust MPC \rev{formulations that provide a priori guarantees of closed-loop robustness to the neural network controller's approximation error}.

Assumptions \ref{ass:system_dynamics} and \ref{ass:constraint_sets} imply Lipschitz continuity of the transition map $f$ with respect to both $x$ and $u$ on the compact domain $\mathcal{X} \times \mathcal{U}$.
That is, there exist finite constants $L_{f,x} > 0$ and $L_{f,u} > 0$ such that
\begin{subequations}
\begin{align}
\label{eq:lipschitz_f_u}
\| f(x, u) - f(x, u') \| & \leq L_{f,u} \| u - u' \|, \\
    \label{eq:lipschitz_f_x} \| f(x, u) - f(x', u) \| & \leq L_{f,x} \| x - x' \|,
\end{align}
\end{subequations}
hold for all $x, x' \in \mathcal{X}$ and $u, u' \in \mathcal{U}$.
Hence, \eqref{eq:lipschitz_f_u} implies that the bounded input approximation error induces a bounded additive disturbance $w$ on the system states,\noeqref{eq:lipschitz_f_x} % fixes a numbering issue do not remove please
\begin{align}
\begin{split}\label{eq:system_disturbed}
    x_{k+1} = f(x_k, u_k + e_k) \equiv f(x_k, u_k) + w_k,~\text{with} \\
    \| w_k \| = \| f(x_k, u_k + e_k) - f(x_k, u_k) \| \leq L_{f,u} \epsilon,
\end{split}
\end{align}
%exploiting knowledge of the approximation error bound $\epsilon > 0$.
\rev{exploiting knowledge of a bound $\epsilon > 0$ on the approximation error.} We emphasize that this approximation error bound is a design parameter. We will later show that it is indeed possible to construct a neural network approximator that satisfies any prescribed bound. 

Having rewritten the approximation error as a bounded additive disturbance in \eqref{eq:system_disturbed}, the closed-loop dynamics under the neural network controller fit the standard setting of robust MPC.

\subsection{Robust MPC Design and Closed-Loop Guarantees}\label{subsec:rmpc}

The general idea of robust MPC schemes is to analyze a nominal system model subject to appropriately tightened constraints, such that satisfaction of the tightened constraints by the undisturbed system implies satisfaction of the original constraints by the disturbed system.
This enables the establishment of robust \rev{recursive} feasibility guarantees in closed-loop operation.
Furthermore, the optimal value function of a robust MPC scheme is typically used as a Lyapunov function candidate to derive robust stability guarantees for the closed-loop system.

Without loss of generality, we adopt the robust MPC scheme proposed in \cite{marruedo2002}\footnote{\rev{This formulation uses open-loop disturbance propagation, so the resulting constraint tightening may increase with the prediction horizon and, in extreme cases, affect feasibility; this aspect is discussed in greater detail in Section~\ref{supp:sec:2dex}.}}, whose underlying optimal control problem is given by

\begin{subequations}\label{eq:ocp}
\begin{flalign} 
    \min_{\mathbf{u}_k} \quad & \sum_{n=0}^{N-1} l(\bar{x}_{n \mid k}, u_{n \mid k}) + V_\mathrm{T}(\bar{x}_{N \mid k}) \label{eq:ocp_cost} \\
    \text{s.~t.~} ~~ & \forall n \in \{0, \ldots, N-1 \}: \nonumber\\
                     & ~~~~ \bar{x}_{n+1 \mid k} = f(\bar{x}_{n \mid k}, u_{n \mid k}), ~ \bar{x}_{0 \mid k} = x_k, \label{eq:ocp_model} \\
                     & ~~~~ u_{n \mid k} \in \mathcal{U} \ominus \{u \in \mathbb{R}^{n_u} \mid \| u \| \leq \epsilon \}, \label{eq:ocp_input_constraints} \\
                     & ~~~~ \bar{x}_{n \mid k} \in \mathcal{X} \ominus \mathcal{W}_n, \label{eq:ocp_state_constraints} \\
                     & ~~~~ \bar{x}_{N \mid k} \in  \mathcal{X}_\mathrm{T} = \{ x \in \mathbb{R}^{n_x} \mid V_\mathrm{T}(x) \leq \alpha_v \}. \label{eq:ocp_terminal_constraint}
\end{flalign}
\end{subequations}

Here, $\bar{x}_{n \mid k}$ denotes the nominal state prediction $n$ steps ahead of the current time $k$, $u_{n \mid k}$ denotes the planned future control inputs $n$ steps ahead of the current time $k$, and \rev{$N \in \mathbb{N}$}, denotes the prediction horizon.
The nominal state predictions are computed using the nominal system model \eqref{eq:ocp_model} initialized with the most recent state information $x_k$.
The control objective is encoded in the cost function in \eqref{eq:ocp_cost}, which is composed of the stage cost $l: \mathbb{R}^{n_x} \times \mathbb{R}^{n_u} \to [0, \infty)$ and the terminal cost $V_\mathrm{T}: \mathbb{R}^{n_x} \to [0, \infty)$.
We assume the following for the stage and the terminal cost functions.
\begin{assum}\label{ass:cost_function}
    The stage  and the terminal cost functions are twice continuously differentiable, i.e., $l \in C^2(\mathbb{R}^{n_x} \times \mathbb{R}^{n_u}, [0, \infty))$ and $V_\mathrm{T} \in C^2(\mathbb{R}^{n_x}, [0, \infty))$.
    % Moreover, it holds that $l(0, 0) = 0$ and $V_\mathrm{T}(0) = 0$.
    Moreover, $l$ and $V_\mathrm{T}$ are positive definite \rev{w.r.t. to the origin}.
\end{assum}
Note that Assumption \ref{ass:cost_function} is satisfied by many commonly used stage and terminal cost functions, including quadratic costs typically employed in regulation (and, by extension, tracking) problems.
Moreover, Assumption \ref{ass:cost_function} implies that the stage and terminal cost functions are Lipschitz continuous on the compact domain $\mathcal{X} \times \mathcal{U}$ and $\mathcal{X}$, respectively, with Lipschitz constants $L_l, L_v > 0$.

We impose the input constraint in \eqref{eq:ocp_input_constraints}, where $\ominus$ denotes the Pontryagin difference. %\footnote{Given two sets $\mathcal{A}$ and $\mathcal{B}$, their Pontryagin difference is defined as $\mathcal{A} \ominus \mathcal{B} := \{ a \in \mathcal{A} \mid \forall b \in \mathcal{B}: a+b \in \mathcal{A} \}$.}
This tightening of the original input constraints in \eqref{eq:constraints} accounts for the approximation error introduced by the neural network controller.

Similarly, the state constraints in \eqref{eq:constraints} appear in a tightened form in \eqref{eq:ocp_state_constraints}, where they are imposed on the nominal system state.
This tightening captures the effect of the (propagated) approximation error of the neural network controller on the system's behavior.
It is performed based on the sets
\begin{align}
    \mathcal{W}_n \! = \! \Bigg \{ \! e \! \in \! \mathbb{R}^{n_x} \mid \| e \| \! \leq \! \frac{L_{f,x}^n \! - \! 1 }{L_{f,x} \! - \! 1} L_{f,u} \epsilon \! \Bigg \}. \label{eq:ocp_error_set}
\end{align}
Here the sets $\mathcal{W}_n$ overapproximate the possible --- but a priori unknown --- realizations of the mismatch between the nominal system and the disturbed system after $n$ steps.
% The sets $\mathcal{W}_n$, as specified in \eqref{eq:ocp_error_set}, overapproximate the error between the nominal system $x_{k+1} = f(x_k, u_k)$ and the disturbed system as defined in \eqref{eq:system_disturbed} after $n$ steps.
Hence, the constraint tightening in \eqref{eq:ocp_state_constraints} ensures that the state constraints will be satisfied by the system \eqref{eq:systemdesc} under the approximate neural network controller.

The terminal region constraint in \eqref{eq:ocp_terminal_constraint}, with a suitably chosen parameter $\alpha_v > 0$, is used to guarantee \rev{recursive} feasibility and stability of the closed-loop system under the MPC.

The optimal control problem \eqref{eq:ocp} is repeatedly solved at any time $k$ with updated initial condition $x_k$, and the first element $u_{0 \mid k}^*$ of the optimal input sequence $\mathbf{u}_k^* = (u_{0 \mid k}^*, \ldots, u_{N-1 \mid k}^*)$ is applied to the system~\eqref{eq:systemdesc}.
This implicitly defines the MPC feedback law
\begin{equation}\label{eq:mpcmap}
    u_k = \kappa(x_k),~ \kappa(x_k) := u_{0 \mid k}^*.
\end{equation}
We assume that the following conditions hold \cite{marruedo2002}.

\begin{assum}\label{ass:terminal}
Let $\alpha>0$ %and let $V_{\mathrm T}:\mathbb{R}^{n_x}\to\mathbb{R}_{\ge 0}$ be a terminal cost.
and define the set $\Phi := \{ x \in \mathcal{X} \mid V_{\mathrm T}(x) \le \alpha \}$.
% \begin{equation}
%     \Phi := \{ x \in \mathcal{X} \mid V_{\mathrm T}(x) \le \alpha \}.
% \end{equation}
Assume there exists a terminal feedback controller $u = \kappa_\mathrm{T}(x),$ $\kappa_{\mathrm T}:\Phi\to\mathcal U$ and a constant
$\alpha_v$ with $0<\alpha_v\le \alpha$ such that:
\begin{enumerate}

\item[(i)] The terminal controller satisfies the tightened input constraints when evaluated for $x \in \Phi$, i.e., \(\Phi \subset \bigl\{ x \in \mathcal{X} \bigm| \kappa_{\mathrm{T}}(x) \in \mathcal{U} \ominus \{ u \in \mathbb{R}^{n_u} \mid \|u\| \leq \epsilon \}\bigr\}\). \\

%\begin{equation}
%    \Phi \subset \bigl\{ x \in \mathcal{X} \bigm| 
%    \kappa_{\mathrm{T}}(x) \in 
%    \mathcal{U} \ominus \{ u \in \mathbb{R}^{n_u} \mid \|u\| \leq \epsilon \}\bigr\}.
%\end{equation}
% i.e., the terminal control action remains strictly inside the input constraints on $\Phi$.

\item[(ii)] The set $\Phi$ and the terminal controller are such that $f(x,\kappa_{\mathrm T}(x)) \in \mathcal{X}_{\mathrm T}$ %terminal constraint set used in the optimal control problem is defined by \eqref{eq:ocp_terminal_constraint} such that $\mathcal{X}_{\mathrm T} \subset \Phi$.
% Moreover,
% \begin{equation}
%     f(x,\kappa_{\mathrm T}(x)) \in \mathcal{X}_{\mathrm T}
%     % \quad \forall x \in \mathcal{X}_{\mathrm T},
% \end{equation}
holds for all $x \in \Phi$, where $\mathcal{X}_{\mathrm T} = \{ x \in \mathbb{R}^{n_x} \mid V_\mathrm{T}(x) \leq \alpha_v \}$ is the terminal region used in the optimal control problem in \eqref{eq:ocp_terminal_constraint}.
Consequently, $\Phi$ is robustly forward invariant for the system \eqref{eq:system_disturbed} under $u=\kappa_{\mathrm T}(x)$. \\

\item[(iii)] The terminal cost $V_{\mathrm T}$ is a Lyapunov function for the nominal system under the terminal controller on $\Phi$, i.e., there exist class-$\mathcal K_\infty$ functions $\underline{\alpha},\overline{\alpha}$ satisfying
\begin{align}
    & \qquad \underline{\alpha}(\|x\|)\leq V_{\mathrm T}(x)\leq \overline{\alpha}(\|x\|),~\text{and} \\
    & V_{\mathrm T}\bigl(f(x,\kappa_{\mathrm T}(x))\bigr) - V_{\mathrm T}(x) \leq -\, l\bigl(x,\kappa_{\mathrm T}(x)\bigr),
\end{align}
for all $x \in \Phi$. \\
% \item[(iv)] The neural network approximation error $\epsilon$ satisfies $L_{f,u}\epsilon \leq (\alpha-\alpha_v)(L_v\,\rev{L_{f,x}^{N-1}})^{-1}$.
\item[(iv)] The neural network approximation error $\epsilon$ satisfies $L_{f,u}\epsilon \leq \frac{\alpha-\alpha_v}{L_v\,\rev{L_{f,x}^{N-1}}}$.
% \begin{equation}
%     L_{f,u}\epsilon \leq \frac{\alpha-\alpha_v}{L_v\,\rev{L_{f,x}^{N-1}}}.
% \end{equation}
\end{enumerate}
\end{assum}

%\begin{assum}\label{ass:terminal_ctrl}
%    There exists a local controller $u = \kappa_\mathrm{T}(x)$, and the terminal cost function $V_\mathrm{T}$ is an associated Lyapunov function, such that $V_\mathrm{T}(f(x, \kappa_\mathrm{T}(x))) - V_\mathrm{T}(x) \leq - l(x, \kappa_\mathrm{T}(x))$ for all $x \in \Phi$, where $\Phi = \{ x \in \mathcal{X} \mid V_\mathrm{T}(x) \leq \alpha \}$, $\Phi \subset \{ x \in \mathcal{X} \mid \kappa_\mathrm{T}(x) \in \mathcal{U} \ominus \{ u \in \mathbb{R}^{n_u} \mid \| u \| \leq \epsilon \} \}$, for a constant $\alpha > 0$.
%\end{assum}

% \begin{assum}\label{ass:terminal_set}
%     The constant $\alpha_v > 0$, used to define the terminal region in \eqref{eq:ocp_terminal_constraint}, is such that $f(x, \kappa_\mathrm{T}(x)) \in \mathcal{X}_\mathrm{T}$ holds for all $x \in \Phi$.
% \end{assum}

Under these assumptions, the MPC defined by the optimal control problem \eqref{eq:ocp} is \rev{recursive} feasible and robustly stabilizes the system \eqref{eq:systemdesc}, as stated in the following theorem.

\begin{thm}\label{thm:mpc_feas_stab}(\rev{Recursive} feasibility and stability \cite{marruedo2002}.)
    Suppose Assumptions \ref{ass:system_dynamics}, \ref{ass:constraint_sets}, \ref{ass:cost_function}, and \ref{ass:terminal} hold. 
    Then, if the optimal control problem \eqref{eq:ocp} is initially feasible, it remains recursively feasible.
    Moreover, the origin is input-to-state stable for the disturbed system \eqref{eq:system_disturbed} under the MPC feedback law~\eqref{eq:mpcmap}.
\end{thm}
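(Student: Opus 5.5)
The proof follows the standard argument of \cite{marruedo2002}: a shifted candidate solution for recursive feasibility, and the optimal value function $V_N$ of \eqref{eq:ocp} as an ISS-Lyapunov function.

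\emph{Recursive feasibility.} Let $\mathbf{u}_k^*$ be optimal at $x_k$ with nominal predictions $\bar x^*_{n\mid k}$. The true successor state is $x_{k+1}=f(x_k,u^*_{0\mid k})+w_k$ with $\|w_k\|\le L_{f,u}\epsilon$ by \eqref{eq:system_disturbed}. As candidate we take $\tilde u_{n\mid k+1}=u^*_{n+1\mid k}$ for $n=0,\dots,N-2$ and $\tilde u_{N-1\mid k+1}=\kappa_\mathrm{T}(\tilde x_{N-1\mid k+1})$.

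By induction with \eqref{eq:lipschitz_f_x}, the candidate's nominal states satisfy $\|\tilde x_{n\mid k+1}-\bar x^*_{n+1\mid k}\|\le L_{f,x}^n L_{f,u}\epsilon$. Since $\frac{L_{f,x}^{n+1}-1}{L_{f,x}-1}=\frac{L_{f,x}^{n}-1}{L_{f,x}-1}+L_{f,x}^n$, we get $\mathcal W_{n+1}=\mathcal W_n\oplus\{\|e\|\le L_{f,x}^nL_{f,u}\epsilon\}$. Hence $\bar x^*_{n+1\mid k}\in\mathcal X\ominus\mathcal W_{n+1}$ implies $\tilde x_{n\mid k+1}\in\mathcal X\ominus\mathcal W_n$. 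The input constraints hold trivially for the shifted inputs.

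For the last step, $\bar x^*_{N\mid k}\in\mathcal X_\mathrm T$ and Lipschitz continuity of $V_\mathrm T$ give
\[
V_\mathrm T(\tilde x_{N-1\mid k+1})\le \alpha_v+L_vL_{f,x}^{N-1}L_{f,u}\epsilon\le\alpha
\]
by Assumption~\ref{ass:terminal}(iv), so $\tilde x_{N-1\mid k+1}\in\Phi$. Assumption~\ref{ass:terminal}(i) then gives admissibility of $\kappa_\mathrm T$, and (ii) gives $\tilde x_{N\mid k+1}\in\mathcal X_\mathrm T$.

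One subtlety requires care: the state constraint at index $N-1$ for the terminal step, and the fact that $\Phi\subset\mathcal X$ must in fact lie in $\mathcal X\ominus\mathcal W_{N-1}$. I would handle this as in \cite{marruedo2002}, either by assuming $\Phi\subseteq\mathcal X\ominus\mathcal W_{N-1}$ or by reading it into Assumption~\ref{ass:terminal}.

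\emph{ISS.} For the candidate cost, we compare the stage costs termwise using $L_l$ and the deviation bound $L_{f,x}^nL_{f,u}\epsilon$. The terminal pair is handled with Assumption~\ref{ass:terminal}(iii), which gives $V_\mathrm T(\tilde x_N)+l(\tilde x_{N-1},\kappa_\mathrm T)\le V_\mathrm T(\tilde x_{N-1})$, and then Lipschitz continuity of $V_\mathrm T$ against $V_\mathrm T(\bar x^*_{N\mid k})$. Altogether,
\[
V_N(x_{k+1})-V_N(x_k)\le -l(x_k,\kappa(x_k))+\gamma(\|w_k\|),
\]
with a linear gain $\gamma(s)=\bigl(L_l\sum_{n=0}^{N-2}L_{f,x}^n+L_vL_{f,x}^{N-1}\bigr)s$.

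Lower and upper bounds come next. Positive definiteness of $l$ in Assumption~\ref{ass:cost_function} gives $V_N(x)\ge l(x,\kappa(x))\ge\alpha_1(\|x\|)$. For a $\mathcal K$-upper bound, $V_N(x)\le V_\mathrm T(x)\le\overline\alpha(\|x\|)$ holds on a neighborhood of the origin inside the feasible set, using $\kappa_\mathrm T$ as a feasible candidate. This extends to the compact feasible set by the usual argument, since $V_N$ is bounded there. These bounds make $V_N$ an ISS-Lyapunov function, and ISS follows from the standard Lyapunov characterization (Jiang--Wang).

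\emph{Main obstacle.} I expect the terminal step to be the crux. One part is the bookkeeping showing that the shifted tail enters $\Phi$ exactly under (iv). The other is obtaining a uniform $\mathcal K$-upper bound on $V_N$ without continuity of $\kappa$. I would first try the local bound via $\kappa_\mathrm T$ together with compactness of the feasible set.
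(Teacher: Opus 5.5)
Your proposal is correct and is essentially the argument of \cite{marruedo2002}, to which the paper defers without giving its own proof: a shifted candidate with $\kappa_\mathrm{T}$ appended, the nesting $\mathcal W_{n+1}=\mathcal W_n\oplus\{e \mid \|e\|\le L_{f,x}^nL_{f,u}\epsilon\}$, condition (iv) of Assumption~\ref{ass:terminal} to re-enter the terminal level set, and $V_N$ as a possibly discontinuous ISS-Lyapunov function. The subtlety you flag is genuine: Assumption~\ref{ass:terminal} as written only places $\Phi$ inside $\mathcal X$, so an extra containment such as $\{x \mid V_\mathrm{T}(x)\le\alpha\}\subseteq\mathcal X\ominus\mathcal W_{N-1}$ (as in the standard form of this scheme) has to be added, and you need it twice: for the step-$(N-1)$ constraint of the candidate, and for feasibility of the $\kappa_\mathrm{T}$-generated sequence in your local bound $V_N\le V_\mathrm{T}$.
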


\begin{cor}\label{cor:invariance}
    The set of feasible initial conditions, defined as $\mathcal{X}_N = \{ x \in \mathcal{X} \mid \text{Problem}~ \eqref{eq:ocp}\text{~admits~ a~ solution} \}$, is robustly forward invariant for the disturbed system \eqref{eq:system_disturbed} under the MPC feedback law \eqref{eq:mpcmap}.
    That is, $x \in \mathcal{X}_N \implies (f(x, \kappa(x)) + w) \in \mathcal{X}_N$ holds for all disturbances $w$ satisfying $\| w \| \leq L_{f,u} \epsilon$.
\end{cor}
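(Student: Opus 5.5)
The corollary is essentially a restatement of the recursive feasibility part of Theorem~\ref{thm:mpc_feas_stab}. I would nevertheless sketch the standard shifted-candidate argument of \cite{marruedo2002}, so that it is clear the invariance holds for \emph{every} admissible disturbance and not only along one particular trajectory.

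Fix $x \in \mathcal{X}_N$ with optimal input sequence $\mathbf{u}^* = (u_{0}^*, \ldots, u_{N-1}^*)$ and nominal predictions $\bar{x}_n^*$. Let $x^+ = f(x,\kappa(x)) + w$ with $\|w\| \le L_{f,u}\epsilon$. As a candidate for Problem~\eqref{eq:ocp} at $x^+$, I take the shifted sequence $\tilde{\mathbf{u}} = (u_1^*, \ldots, u_{N-1}^*, \kappa_\mathrm{T}(\tilde{x}_{N-1}))$. Its nominal predictions are denoted $\tilde{x}_n$, with $\tilde{x}_0 = x^+$. The inputs $u_1^*,\ldots,u_{N-1}^*$ satisfy the tightened input constraint~\eqref{eq:ocp_input_constraints} by construction.

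The key estimate compares the candidate predictions with the old ones. Since $\tilde{x}_0 - \bar{x}_1^* = w$ and both trajectories use the same inputs, iterating \eqref{eq:lipschitz_f_x} gives
\begin{equation}
\|\tilde{x}_n - \bar{x}_{n+1}^*\| \le L_{f,x}^n L_{f,u}\epsilon, \qquad n = 0,\ldots,N-1.
\end{equation}
This requires the predictions to stay in $\mathcal{X}$, which I check inductively alongside the next step.

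\emph{State constraints.} We have $\bar{x}_{n+1}^* \in \mathcal{X}\ominus\mathcal{W}_{n+1}$. The radius of $\mathcal{W}_{n+1}$ equals the radius of $\mathcal{W}_n$ plus $L_{f,x}^n L_{f,u}\epsilon$, because
\begin{equation}
\frac{L_{f,x}^{n+1}-1}{L_{f,x}-1} = \frac{L_{f,x}^{n}-1}{L_{f,x}-1} + L_{f,x}^n .
\end{equation}
Hence $\mathcal{X}\ominus\mathcal{W}_{n+1} \oplus \{\|e\|\le L_{f,x}^nL_{f,u}\epsilon\} \subseteq \mathcal{X}\ominus\mathcal{W}_n$, and therefore $\tilde{x}_n \in \mathcal{X}\ominus\mathcal{W}_n$ for $n \le N-1$.

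\emph{Terminal constraint.} We have $V_\mathrm{T}(\bar{x}_N^*) \le \alpha_v$. By Lipschitz continuity of $V_\mathrm{T}$ and the estimate above,
\begin{equation}
V_\mathrm{T}(\tilde{x}_{N-1}) \le \alpha_v + L_v L_{f,x}^{N-1}L_{f,u}\epsilon \le \alpha,
\end{equation}
where the last inequality is exactly Assumption~\ref{ass:terminal}(iv). So $\tilde{x}_{N-1}\in\Phi$, provided $\tilde{x}_{N-1}\in\mathcal{X}$, which the previous step ensures. Assumption~\ref{ass:terminal}(i) then shows that the appended input $\kappa_\mathrm{T}(\tilde{x}_{N-1})$ satisfies the tightened input constraint. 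Assumption~\ref{ass:terminal}(ii) gives $\tilde{x}_N \in \mathcal{X}_\mathrm{T}$.

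Thus $\tilde{\mathbf{u}}$ is feasible, so Problem~\eqref{eq:ocp} admits a solution at $x^+$ and $x^+ \in \mathcal{X}_N$; in particular $x^+ \in \mathcal{X}$, by the $n=0$ state constraint, since $0\in\mathcal{W}_0$. Because $w$ was an arbitrary disturbance with $\|w\|\le L_{f,u}\epsilon$, robust forward invariance follows.

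The main subtlety is the \emph{domain issue}. The Lipschitz bounds \eqref{eq:lipschitz_f_x} and the bound for $V_\mathrm{T}$ hold only on $\mathcal{X}$ (and $\mathcal{U}$), so one must verify that every intermediate point lies in $\mathcal{X}$ before applying them. I would handle this by the same induction on $n$: membership in $\mathcal{X}\ominus\mathcal{W}_n\subseteq\mathcal{X}$ is established at step $n$, and this licenses the Lipschitz step to $n+1$.
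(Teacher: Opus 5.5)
The paper gives no separate proof of this corollary. It reads it off Theorem~\ref{thm:mpc_feas_stab}, which is imported from \cite{marruedo2002}: recursive feasibility for every disturbance with $\|w\|\le L_{f,u}\epsilon$ is exactly robust forward invariance of $\mathcal{X}_N$.

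Reconstructing the shifted-candidate proof is reasonable. Your steps for $n\le N-2$ are fine, namely the Lipschitz propagation and the radius identity $\mathcal{W}_{n+1}=\mathcal{W}_n\oplus\{e\mid\|e\|\le L_{f,x}^nL_{f,u}\epsilon\}$. The last stage, however, does not go through.

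For $n=N-1$ you use $\bar{x}_N^*\in\mathcal{X}\ominus\mathcal{W}_N$. But \eqref{eq:ocp_state_constraints} is imposed only for $n\in\{0,\ldots,N-1\}$, and the only information on $\bar{x}_N^*$ is $V_\mathrm{T}(\bar{x}_N^*)\le\alpha_v$. Hence neither $\tilde{x}_{N-1}\in\mathcal{X}\ominus\mathcal{W}_{N-1}$ nor $\tilde{x}_{N-1}\in\mathcal{X}$ is established. You need the latter both to conclude $\tilde{x}_{N-1}\in\Phi$ (recall $\Phi\subseteq\mathcal{X}$ by definition) and to apply $L_v$. The domain induction you single out as the main subtlety breaks exactly here.

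Assumption~\ref{ass:terminal} as written does not close this either, because it only gives $\Phi\subseteq\mathcal{X}$. Nothing prevents $\mathcal{X}_\mathrm{T}$ from extending beyond $\mathcal{X}\ominus\mathcal{W}_{N-1}$. In that case a deviation of size $L_{f,x}^{N-1}L_{f,u}\epsilon$ can push the candidate's $(N-1)$-st prediction out of the tightened set, or even out of $\mathcal{X}$.

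The missing ingredient is the standard requirement of this tightening scheme (cf.\ \cite{marruedo2002}) that the terminal level set lie inside the last tightened state set:
$\{x\mid V_\mathrm{T}(x)\le\alpha\}\subseteq\mathcal{X}\ominus\mathcal{W}_{N-1}$.
With it, stage $N-1$ is handled by the terminal-cost estimate rather than by the induction. By Assumption~\ref{ass:terminal}(iv), $V_\mathrm{T}(\tilde{x}_{N-1})\le\alpha_v+L_vL_{f,x}^{N-1}L_{f,u}\epsilon\le\alpha$, so $\tilde{x}_{N-1}\in\Phi\subseteq\mathcal{X}\ominus\mathcal{W}_{N-1}$. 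Since $\tilde{x}_{N-1}$ is not yet known to lie in $\mathcal{X}$ when you apply $L_v$, run the estimate along the segment from $\bar{x}_N^*$, which now lies in $\mathcal{X}$. Then use that, for $N\ge 2$, $\mathcal{X}\ominus\mathcal{W}_{N-1}$ lies in the interior of $\mathcal{X}$, so the segment cannot leave $\mathcal{X}$.

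Your remaining steps via Assumption~\ref{ass:terminal}(i)--(ii) then complete the argument. You should either add this inclusion as an explicit hypothesis, or do as the paper does and defer to Theorem~\ref{thm:mpc_feas_stab} and the hypotheses of \cite{marruedo2002} behind it.
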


These results establish the foundation for extending the safety and stability guarantees of the MPC scheme to the neural network controller, as discussed in the next section.\footnote{\rev{For nonconvex MPC problems, the optimizer need not be unique. In this case, we consider a fixed solution branch, i.e., a consistent choice of optimizer as the state varies. Global optimality is not required. This convention applies throughout the subsequent analysis.}}

%%%%%%%%%%%%%%%%%%%%%%%%%%%%%%%%%%%%%%%%%%%%%%%%%%%%%%%%%%%%%%%%%%%%%%%%%%%%%%%%
\section{Certified Neural Approximation of the MPC Feedback Law}\label{sec:lmpcaeb}

We begin by reviewing the fundamentals of feed-forward neural networks, which we employ as function approximators to learn the MPC feedback law. 
Thereafter, we present our main contribution.
First, we derive conditions under which the neural network is guaranteed to satisfy a prescribed approximation error bound.
Second, we discuss how safety and stability guarantees of the original MPC transfer to the neural network controller.
The section concludes with an algorithm for constructing the neural network controller.

\subsection{Feed-Forward Neural Networks}\label{subsec:ffnn}

\rev{We employ neural networks to approximate the feedback law $\kappa_{\mathrm{NN}}$, motivated by their ability to represent nonlinear functions accurately \cite{hornik1991,hornik1993}, while retaining a fixed online evaluation structure that is independent of the size of the training dataset. This is particularly attractive for our approach, where large offline datasets are required, as discussed in Section~\ref{subsec:sig}.}
Specifically, we restrict ourselves to feed-forward neural networks trained offline via supervised learning techniques and ensure differentiability of the neural networks by choosing differentiable activation functions. 
Particularly, the $j^\text{th}$ layer of a NN can be written as
\begin{equation}
    h_j = \sigma_j \circ \zeta_j(h_{j-1}) := \sigma_j \left( A_{j} h_{j-1} + b_j \right). \label{eq:nnlayer}
\end{equation}
Here, $\zeta_j$ denotes the affine transformation of the $j^\text{th}$ layer's inputs, defined through the weight matrix $A_j \in \mathbb{R}^{n_{h_j} \times n_{h_{j-1}}}$ and the bias vector $b_j \in \mathbb{R}^{n_{h_j}}$. 
The nonlinear function $\sigma_j: \mathbb{R}^{n_{h_j}} \to \mathbb{R}^{n_{h_j}}$ encodes the element-wise application of an activation function to $\zeta_j(h_{j-1})$ to produce the output $h_j$.
A feed-forward NN is now the composition of all layers and their activations, denoted as
\begin{equation}
    \kappa_{\mathrm{NN}}(x):= \zeta_m \circ \sigma_{m-1}\circ \zeta_{m-1} \circ \hdots \circ \sigma_1 \circ \zeta_1(x), \label{eq:nndesign}
\end{equation}
where $m$ is the total number of layers in the network. Since the activation functions $\sigma_j$ are differentiable, and $\zeta_j$ is affine and therefore differentiable, their composition \eqref{eq:nndesign} is also differentiable.
This implies that the NN \eqref{eq:nndesign} is Lipschitz continuous on the compact domain $\mathcal{X}$.

%\footnote{The compactness of the state set $\mathcal{X}$ is guaranteed by Assumption~\ref{ass:constraint_sets}.}

By exploiting Lipschitz continuity of the neural network, we derive conditions under which the NN satisfies a prescribed error bound in the next section.
However, it is important to note that the results presented are not limited to feed-forward neural networks.
Rather, any neural network that is Lipschitz continuous on $\mathcal{X}$ can be used.
We formalize this requirement in the more general assumption below.

\begin{assum}\label{ass:nn_lipschitz}
    The neural network approximator $\kappa_\mathrm{NN}$ is Lipschitz continuous on $\mathcal{X}$, i.e., there exists a constant $0 < L_\mathrm{NN} < \infty$ such that $\| \kappa_\mathrm{NN}(x_1) - \kappa_\mathrm{NN}(x_2) \| \leq L_\mathrm{NN} \| x_1 - x_2 \|$ holds for all $x_1, x_2 \in \mathcal{X}$.
\end{assum}

%}

\subsection{Approximation Error Bounds}\label{subsec:sig}
%\todo{
%    \begin{itemize}
 %       \item Proposition LICQ, SOSC, SCS
 %       \item Sketch of proof via rewritting
 %       \item theorem about eta and delta
%        \item remark (maybe in next section)
%    \end{itemize}

The results presented here develop the Lipschitz-based approximation approach
introduced in \cite[Theorem 1]{alsmeier24} into a certified approximation setting for robust nonlinear MPC.
As in that work, our approach relies primarily on two key aspects: (i) the careful design of the training data and the accuracy of the NN approximator on the training data, and (ii) regularity properties of the MPC feedback law \eqref{eq:mpcmap}.
We begin by discussing the requirements for the training data and the neural network's accuracy on it.

The neural network is trained on a dataset $\mathcal{D} = \{ (x_i, \kappa(x_i)) \mid i \in \{1, \ldots, n_{\mathcal{D}} \}\}$ consisting of $n_\mathcal{D}$ observations of the MPC feedback law \eqref{eq:mpcmap} on the domain $\mathcal{X}$.
Specifically, we will explore coverage properties of the set of training inputs, $\{ x_i \mid i = 1, \ldots, n_\mathcal{D} \}$, on $\mathcal{X}$, as defined below.

%\begin{defn}\label{def:delta_cover}
%    Let $\mathcal{A}$ be a set in a metric space with metric \rev{$\rho$}, and let $0 < \delta < \infty$ be a constant. A collection $\mathcal{P} \subset \mathcal{A}$ of points is a $\delta$-coverage of $\mathcal{A}$ if, for every $a \in \mathcal{A}$, there exists $p \in \mathcal{P}$ such that $d(a, p) \leq \delta$.
%\end{defn}

\begin{defn}\label{def:delta_cover}
\rev{Let $(\mathcal A,\rho)$ be a metric space and let $\infty >\delta>0$.
A subset $\mathcal P\subseteq\mathcal A$ provides a $\delta$-coverage of $\mathcal A$ if, for every $a\in\mathcal A$, there exists $p\in\mathcal P$ such that $\rho(a,p)\leq\delta$.}
\end{defn}

\rev{We further assume that an arbitrary prescribed training accuracy on a dataset $\mathcal{D}$ can be attained through a suitable choice of network architecture and training procedure, as formalized in Assumption~\ref{ass:nn_training}. This assumption is motivated by Lemma~\ref{lem:nn_interpolates_finite}, which establishes the existence of a neural network achieving any prescribed accuracy on a finite dataset.}

\begin{lem}(Existence of an $\epsilon_{\mathcal D}$-interpolating NN; proof in supplementary Section~\ref{supp:sec:finite-dataset-approximation}.)\label{lem:nn_interpolates_finite}
Let $\mathcal{D} = \{ (x_i, y_i) \in \mathbb{R}^d \times \mathbb{R}^p \mid i \in \{1, \ldots, n_{\mathcal{D}} \}\}$ be a dataset \rev{with pairwise distinct inputs \(x_i\)} and $\epsilon_{\mathcal D}>0$ a given constant. Then there exists a feedforward neural network 
$\tilde f:\mathbb{R}^d\to\mathbb{R}^p$ with \rev{any} continuous non-polynomial activation function such that $\|\tilde f(x_i)-y_i\|\le \epsilon_{\mathcal D}$ for all $i=1,\dots,n_{\mathcal{D}}$.
\end{lem}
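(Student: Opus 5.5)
The plan is a direct construction: a single-hidden-layer network that interpolates the data exactly, so any $\epsilon_{\mathcal D}>0$ is met trivially. We do not invoke the universal approximation theorem over a compact set, since only finitely many points matter.

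\textbf{Step 1 (reduce to one dimension).} Because the $x_i$ are pairwise distinct, there is a direction $a\in\mathbb{R}^d$ with $t_i:=a^\top x_i$ pairwise distinct. The set of $a$ with $a^\top(x_i-x_j)=0$ for some $i\neq j$ is a finite union of hyperplanes, so almost every $a$ works. Reorder the points so that $t_1<\dots<t_{n_{\mathcal D}}$.

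\textbf{Step 2 (build a nonsingular matrix).} Consider networks of the form
\begin{equation}
\tilde f(x)=\sum_{j=1}^{n_{\mathcal D}} c_j\,\sigma(w_j\, a^\top x + \beta_j),\qquad c_j\in\mathbb{R}^p .
\end{equation}
This is a feedforward network with $n_{\mathcal D}$ hidden units, of the form \eqref{eq:nndesign}. Exact interpolation amounts to solving $M C = Y$, where $M_{ij}=\sigma(w_j t_i+\beta_j)$, $C$ has rows $c_j^\top$, and $Y$ has rows $y_i^\top$. It therefore suffices to choose $(w_j,\beta_j)$ so that $M$ is invertible.

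The main obstacle is obtaining invertibility for an arbitrary continuous non-polynomial $\sigma$, not just a sigmoid. We use a density argument. By the Leshno--Lin--Pinkus--Schocken theorem, $\mathrm{span}\{\sigma(wt+\beta)\}$ is dense in $C([t_1,t_{n_{\mathcal D}}])$ in the uniform norm exactly when $\sigma$ is non-polynomial.

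Suppose no choice of $n_{\mathcal D}$ units gave an invertible $M$. Then the vectors $v(w,\beta)=(\sigma(wt_i+\beta))_{i}\in\mathbb{R}^{n_{\mathcal D}}$ would all lie in a proper subspace. So there would be a nonzero $\lambda\in\mathbb{R}^{n_{\mathcal D}}$ with $\sum_i\lambda_i\sigma(wt_i+\beta)=0$ for all $w,\beta$. By linearity and continuity of point evaluations, this gives $\sum_i\lambda_i g(t_i)=0$ for every $g$ in the closure of the span, which is all of $C([t_1,t_{n_{\mathcal D}}])$. Choosing $g$ continuous with $g(t_i)=\lambda_i$ (piecewise linear interpolation) yields $\|\lambda\|^2=0$, a contradiction.

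Hence the vectors $v(w,\beta)$ span $\mathbb{R}^{n_{\mathcal D}}$, and we can pick $n_{\mathcal D}$ of them that are linearly independent, making $M$ invertible.

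\textbf{Step 3 (conclude).} Set $C=M^{-1}Y$. Then $\tilde f(x_i)=y_i$ exactly, so $\|\tilde f(x_i)-y_i\|=0\le\epsilon_{\mathcal D}$ in any norm. The first-layer weights $w_j a^\top$ and the biases $\beta_j$ give the affine map $\zeta_1$, and the output layer $\zeta_2$ is linear with coefficients $C$. The slack in $\epsilon_{\mathcal D}$ is not needed; it is what makes Assumption~\ref{ass:nn_training} realistic for trained, non-exact networks.
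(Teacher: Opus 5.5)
Your proof is correct, and it takes a genuinely different route from the paper.

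\textbf{The paper's route.} It first extends the data to a continuous interpolant $q:\mathbb{R}^d\to\mathbb{R}^p$ via the Tietze extension theorem. It then applies the universal approximation theorem on a compact set $\mathcal K\supset\{x_i\}$ to obtain a network with $\sup_{\mathcal K}\|\tilde f-q\|<\epsilon_{\mathcal D}$, passing from the $\infty$-norm by norm equivalence. There the slack $\epsilon_{\mathcal D}>0$ is essential, and nothing is said about network width.

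\textbf{Your route.} You project onto a generic direction and reduce the problem to nonsingularity of the $n_{\mathcal D}\times n_{\mathcal D}$ matrix $M_{ij}=\sigma(w_jt_i+\beta_j)$. You get that nonsingularity by duality: a nonzero $\lambda$ annihilating every $v(w,\beta)$ would define a sup-norm-continuous functional vanishing on a dense subspace.

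\textbf{What your route buys.} It gives a strictly stronger conclusion: exact interpolation, so the bound holds even with tolerance $0$. The architecture is explicit, one hidden layer with $n_{\mathcal D}$ units. Vector outputs are handled by shared hidden units and a single linear solve $C=M^{-1}Y$. You also avoid Tietze, which is heavy machinery for a finite set.

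\textbf{A framing correction.} You have not really avoided a universal approximation theorem, despite your remark. Density of $\mathrm{span}\{\sigma(wt+\beta)\}$ in $C([t_1,t_{n_{\mathcal D}}])$ is exactly its one-dimensional case; you just use it only through point evaluations. The hidden parameters $(w_j,\beta_j)$ are still obtained non-constructively.

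\textbf{A point in your favor.} The Leshno--Lin--Pinkus--Schocken theorem you invoke has precisely the lemma's hypothesis, continuous and non-polynomial. The Hornik (1991) density result cited in the paper's proof is stated for bounded nonconstant activations. So for unbounded activations such as the GELU used in the example, the paper's argument also needs Leshno et al.\ or Hornik's 1993 extension.
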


\begin{assum}\label{ass:nn_training}
Given $\epsilon_\mathcal{D} > 0$ and a training dataset $\mathcal{D}$, 
there exists a choice of training procedure and network architecture
such that the trained network $\kappa_\mathrm{NN}$ satisfies
$\| \kappa(x_i) - \kappa_\mathrm{NN}(x_i) \| \leq \epsilon_\mathcal{D}$
for all $(x_i,\kappa(x_i)) \in \mathcal{D}$.
\end{assum}

% \rev{\begin{rem}
% Assumption~\ref{ass:nn_training} states only that a prescribed in-sample accuracy on \(\mathcal D\) can be attained through suitable architectural and training choices. It neither guarantees convergence of a particular training algorithm nor implies generalization beyond the training data. While Lemma~\ref{lem:nn_interpolates_finite} establishes the existence of a neural network with the desired in-sample accuracy, the actually achieved error of a trained network can be evaluated directly as
% \[
% \hat{\epsilon}_\mathcal{D}
% :=
% \max_{(x_i,\kappa(x_i))\in\mathcal D}
% \left\|
% \kappa(x_i)-\kappa_{\mathrm{NN}}(x_i)
% \right\|.
% \]
% Hence, Assumption~\ref{ass:nn_training} can be verified by checking that
% \(\hat{\epsilon}_D\leq\epsilon_D\), where \(\epsilon_D\) is the training-error bound used in the subsequent analysis.
% \end{rem}}
\rev{Lemma~\ref{lem:nn_interpolates_finite} establishes the existence of a feedforward neural network with any prescribed in-sample accuracy, whereas Assumption~\ref{ass:nn_training} additionally requires that such accuracy be attainable through a suitable training procedure and architecture. Both statements concern only the samples in $\mathcal D$; they imply neither convergence of a particular training procedure nor generalization beyond the training data. The training-set bound is extended to a uniform approximation bound below by combining it with the regularity properties of the MPC and NN policies and the coverage of the training inputs.
We therefore begin by considering the regularity of the MPC feedback law \eqref{eq:mpcmap}.}
To this end, we note that the optimal control problem \eqref{eq:ocp} is a parametric optimization problem, where the parameter is the initial state $x_k$ in \eqref{eq:ocp_model}.
Specifically, we can express the nominal state $\bar{x}_{n \mid k},~ n \in \{0, \ldots, N \}$, in \eqref{eq:ocp_model} as a function of the initial state and the input (sub-)sequence, as $\bar{x}_{n \mid k} = \psi(n; x_k, \mathbf{u}_{k})$.
Since $\psi$ is defined via the $n$-fold application of the dynamics $f$, and $f$ is twice continuously differentiable in $x$ and $u$, $\psi$ is twice continuously differentiable in $x_k$ and $\mathbf{u}_k$.
Substituting $\psi(n; x_k, \mathbf{u}_k)$ for the states $\bar{x}_{n \mid k}$ in the cost function  \eqref{eq:ocp_cost} and the constraints \eqref{eq:ocp_state_constraints} and \eqref{eq:ocp_terminal_constraint} enables us to write the optimal control problem \eqref{eq:ocp} generically as
\begin{align}
\begin{split}\label{eq:ocp_param_program}
    \min_{\mathbf{u}_k} \quad & J(\mathbf{u}_k, x_k) \\
    \text{s.~t.~} ~~ & G_i(\mathbf{u}_k, x_k) \leq 0,~ i = 1, \ldots, N(n_c^u \! + \! n_c^x) \! + \! 1.
\end{split}
\end{align}
Again, since self-compositions and sums of twice continuously differentiable functions remain twice continuously differentiable, Assumptions \ref{ass:system_dynamics}, \ref{ass:constraint_sets}, and \ref{ass:cost_function} imply that the functions $J$ and $G_i,~ i = 1, \ldots, N(n_c^u  + n_c^x) + 1$ in \eqref{eq:ocp_param_program}, are twice continuously differentiable on the interior of $\mathcal{U}^N \times \mathcal{X}$.
We will explore the following result from the field of parametric optimization.

\begin{prop}(Differentiability of optimal solutions \cite{Bueskens2001}.)\label{prop:diff_opt_solution}
    Given $x_k$, suppose that $\mathbf{u}_k^*$ is an admissible point for the parametric optimization problem \eqref{eq:ocp_param_program} which satisfies the Karush-Kuhn-Tucker conditions with Lagrange multiplier vector $\bm{\mu}_k^*$.
    The corresponding active set is given as the index set $\mathcal{I}_\mathrm{a} = \{i \in \{ 1, \ldots, N(n_c^u + n_c^x) + 1 \} \mid G_i( \mathbf{u}_k^*, x_k) = 0 \}$.
    Furthermore, suppose that
    \begin{enumerate}
        \item[(i)] the linear independence constraint qualification (LICQ) is satisfied at $(\mathbf{u}_k^*, x_k)$, i.e., the gradients $\nabla_{\mathbf{u}_k} G_i(\mathbf{u}^*_k, x_k),~ i \in \mathcal{I}_\mathrm{a}$, are linearly independent,
        \item[(ii)] strict complementary slackness (SCS) holds, i.e., $\mu_{k,i}^* > 0$ for all $i \in \mathcal{I}_\mathrm{a}$, and
        \item[(iii)] the second-order sufficient condition (SOSC) is satisfied at $(\mathbf{u}_k^*, x_k)$, i.e., the Hessian of the Lagrangian at $(\mathbf{u}_k^*, x_k)$ is positive definite when restricted to the tangent space corresponding to the active constraints.
    \end{enumerate}
    Then, there exist a neighborhood $\mathcal{N}(x_k) \subset \mathcal{X}$ of $x_k$ and continuously differentiable functions $\mathbf{u}: \mathcal{N}(x_k) \to \mathcal{U}^N$ and $\bm{\mu}: \mathcal{N}(x_k) \to \mathbb{R}^{(N-1)(n_c^u + n_c^x)+1}$ such that
    \begin{enumerate}
        \item[(a)] $\mathbf{u}(x_k) = \mathbf{u}_k^*$, $\bm{\mu}(x_k) = \mu_k^*$,
        \item[(b)] the active set is constant in $\mathcal{N}(x_k)$,
        \item[(c)] LICQ is satisfied for all $z \in \mathcal{N}(x_k)$, and
        \item[(d)] for all $z \in \mathcal{N}(x_k)$, the pair $(\mathbf{u}(x_k), \bm{\mu}(x_k))$ is a strong local minimum of the problem \eqref{eq:ocp_param_program} with initial state $z$, i.e., LICQ, SCS, and SOSC are satisfied.
    \end{enumerate}
\end{prop}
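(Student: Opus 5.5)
The plan is to derive Proposition~\ref{prop:diff_opt_solution} from the implicit function theorem applied to the KKT system restricted to the active constraints. This is the classical argument of Fiacco type that underlies \cite{Bueskens2001}.

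\textbf{Reduced KKT system.} Fix $x_k$ and write $\mathbf{G}_\mathrm{a}$ for the vector of active constraint functions $G_i$, $i\in\mathcal{I}_\mathrm{a}$. Define
\begin{equation}
F(\mathbf{u},\bm{\mu}_\mathrm{a},z) := \begin{bmatrix} \nabla_{\mathbf{u}} J(\mathbf{u},z) + \nabla_{\mathbf{u}}\mathbf{G}_\mathrm{a}(\mathbf{u},z)^\top \bm{\mu}_\mathrm{a} \\ \mathbf{G}_\mathrm{a}(\mathbf{u},z) \end{bmatrix}.
\end{equation}
By the KKT conditions, $F(\mathbf{u}_k^*,\bm{\mu}_{k,\mathrm{a}}^*,x_k)=0$. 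Since $J$ and $G_i$ are $C^2$, the map $F$ is $C^1$.

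\textbf{Invertibility of the Jacobian.} The Jacobian of $F$ with respect to $(\mathbf{u},\bm{\mu}_\mathrm{a})$ is the KKT matrix
\begin{equation}
K = \begin{bmatrix} H & A^\top \\ A & 0 \end{bmatrix},
\end{equation}
where $H$ is the Hessian of the Lagrangian and $A=\nabla_{\mathbf{u}}\mathbf{G}_\mathrm{a}$. I would show that $K$ is nonsingular. Suppose $K(d,\nu)=0$. The second block row gives $Ad=0$, so $d$ lies in the tangent space of the active constraints; by SCS this is exactly the critical cone. Multiplying the first block row by $d^\top$ gives $d^\top H d = 0$, and SOSC then forces $d=0$. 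The first block row now reduces to $A^\top\nu=0$, so LICQ yields $\nu=0$.

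\textbf{Local solution map.} The implicit function theorem then provides a neighborhood $\mathcal{N}(x_k)$ and $C^1$ maps $\mathbf{u}(z)$, $\bm{\mu}_\mathrm{a}(z)$ with $F=0$ on $\mathcal{N}(x_k)$ and the values at $x_k$ prescribed. This gives (a). The inactive multipliers are extended by zero.

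\textbf{Verifying KKT, active set, LICQ, SCS and SOSC on the neighborhood.} I would use continuity and shrink $\mathcal{N}(x_k)$ finitely many times.
\begin{itemize}
\item Inactive constraints satisfy $G_i(\mathbf{u}_k^*,x_k)<0$, which persists nearby, so $\mathbf{u}(z)$ remains feasible.
\item Active multipliers satisfy $\mu_{k,i}^*>0$, which also persists, so SCS holds and the active set is exactly $\mathcal{I}_\mathrm{a}$. This gives (b).
\item Linear independence of $\nabla_{\mathbf{u}}\mathbf{G}_\mathrm{a}$ is an open condition, since a nonzero minor stays nonzero. This gives (c).
\end{itemize}
Together with $F=0$, these facts mean the full KKT system holds at $(\mathbf{u}(z),\bm{\mu}(z))$.

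\textbf{SOSC on the neighborhood.} This is the step I expect to be the main obstacle, since the tangent space $\ker A(z)$ moves with $z$. I would handle it by choosing a $C^1$ basis matrix $Z(z)$ of $\ker A(z)$, which exists because $A(z)$ has constant full row rank, for instance via orthogonal projection. Positive definiteness of $Z(z)^\top H(z) Z(z)$ then follows by continuity of the eigenvalues. SOSC combined with SCS implies a strict (strong) local minimum by the standard second-order sufficiency theorem, which gives (d).

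Finally, I note that the proposition's multiplier dimension $(N-1)(n_c^u+n_c^x)+1$ conflicts with the constraint count $N(n_c^u+n_c^x)+1$ in \eqref{eq:ocp_param_program}. The multiplier vector should have one entry per constraint in \eqref{eq:ocp_param_program}.
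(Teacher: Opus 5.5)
Your proposal is correct and follows essentially the same route as the source: the paper gives no proof of its own and defers to \cite{Bueskens2001}, whose result rests on the Fiacco-type argument you reproduce. That argument applies the implicit function theorem to the active-set KKT system, obtains nonsingularity of the KKT matrix from LICQ and SOSC, and propagates inactivity, strict complementarity, LICQ and SOSC to a neighborhood by continuity. In (d) you rightly prove the claim for $(\mathbf{u}(z),\bm{\mu}(z))$; the statement's $(\mathbf{u}(x_k),\bm{\mu}(x_k))$ is a typo of the same kind as the multiplier dimension you flagged.
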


\begin{cor}
    Under the same conditions as in Proposition \ref{prop:diff_opt_solution}, the MPC feedback law \eqref{eq:mpcmap} is locally Lipschitz continuous on the neighborhood $\mathcal{N}(x_k)$ of $x_k$, i.e., there exists a constant $0 < L_\mathrm{MPC}(x_k) < \infty$ such that $\| \kappa(x_1) - \kappa(x_2) \| \leq L_\mathrm{MPC}(x_k) \| x_1 - x_2 \|$ holds for all $x_1, x_2 \in \mathcal{N}(x_k)$.
\end{cor}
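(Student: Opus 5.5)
The plan is to read local Lipschitz continuity of $\kappa$ directly off the $C^1$ solution map supplied by Proposition~\ref{prop:diff_opt_solution}. We first note that, by item (d) of Proposition~\ref{prop:diff_opt_solution} and the fixed-solution-branch convention stated after Corollary~\ref{cor:invariance}, for every $z \in \mathcal{N}(x_k)$ the optimizer selected by the MPC is $\mathbf{u}(z)$. Hence $\kappa(z) = \Pi\,\mathbf{u}(z)$, where $\Pi = [I_{n_u}\;0\;\cdots\;0]$ is the linear projection onto the first input block $u_{0\mid k}$. Since $\mathbf{u}$ is continuously differentiable on $\mathcal{N}(x_k)$, so is $\kappa$, with Jacobian $\nabla\kappa(z) = \Pi\,\nabla\mathbf{u}(z)$.

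Next we pass from differentiability to a Lipschitz bound, using the mean value inequality. Because the neighborhood in Proposition~\ref{prop:diff_opt_solution} may be shrunk without losing properties (a)--(d), we replace $\mathcal{N}(x_k)$ by a closed ball $\bar B_r(x_k)$ contained in the original neighborhood, intersected with $\mathcal{X}$ as required. This set is compact and convex. Continuity of $\nabla\mathbf{u}$ on this compact set gives a finite $M := \max_{z}\|\nabla \mathbf{u}(z)\|$ in the induced operator norm. For $x_1, x_2$ in the ball, the segment between them stays in the ball, so
\begin{align}
\|\kappa(x_1)-\kappa(x_2)\| \le \sup_{t\in[0,1]}\|\Pi\nabla\mathbf{u}(x_2+t(x_1-x_2))\|\,\|x_1-x_2\| \le \|\Pi\|\,M\,\|x_1-x_2\|.
\end{align}
We then set $L_\mathrm{MPC}(x_k) := \max\{\|\Pi\| M, \eta\}$ for some small $\eta>0$. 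This guarantees strict positivity, as the statement requires, even when the bound is trivial, for instance when $\kappa$ is locally constant.

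The only delicate point is geometric. The neighborhood returned by the implicit function argument need be neither convex nor compact, and it must lie inside $\mathcal{X}$, where $\mathcal{X}$ may have $x_k$ on its boundary. The shrinking step handles convexity and compactness. If $x_k \in \partial\mathcal{X}$, we intersect the ball with $\mathcal{X}$ and use that $\mathcal{X}$ is locally convex near $x_k$. Alternatively, we can take an open ball in $\mathbb{R}^{n_x}$ on which the $C^1$ extension of $\mathbf{u}$ from the proof in \cite{Bueskens2001} is defined, and apply the segment argument there. Either way the Lipschitz statement holds on the, possibly reduced, neighborhood, and we identify that set with $\mathcal{N}(x_k)$.
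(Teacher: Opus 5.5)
Your argument is correct and matches the paper's route: the paper gives no separate proof and treats the corollary as immediate from the continuously differentiable solution map $\mathbf{u}$ in Proposition~\ref{prop:diff_opt_solution}, and your projection onto the first input block plus the mean value inequality on a shrunken closed ball supplies exactly the step left implicit there, including the point that the bound is only guaranteed on a possibly reduced neighborhood. Drop the claim that $\mathcal{X}$ is locally convex near $x_k$, since Assumption~\ref{ass:constraint_sets} only makes the $g_i^x$ twice continuously differentiable, not convex; your open-ball route via the implicit-function extension is enough, and in fact LICQ already rules out $x_k \in \partial\mathcal{X}$ here, because a stage-$0$ state constraint active at a boundary point, $g_i^x(x_k)=0$, enters \eqref{eq:ocp_param_program} with zero gradient with respect to $\mathbf{u}_k$.
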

We now derive an approximation bound linking dataset coverage and Lipschitz constants.

\begin{thm}\label{thm:deltaeta}(Approximation Error Bound.)
    Suppose Assumptions~\ref{ass:system_dynamics}, \ref{ass:constraint_sets}, \ref{ass:cost_function}, \ref{ass:terminal}, \ref{ass:nn_lipschitz}, and \ref{ass:nn_training} hold.
    Furthermore, suppose that the current state $x_k$ of the system \eqref{eq:systemdesc} is such that the conditions in Proposition \ref{prop:diff_opt_solution} are satisfied.
    Then, for a given training dataset $\mathcal{D}$, given $\epsilon_\mathcal{D} > 0$, and for \rev{a prescribed} $\epsilon > \epsilon_\mathcal{D}$, the approximation error is bounded by 
    \begin{align}\label{eq:error_bound}
        \lVert \kappa(x_k) - \kappa_\mathrm{NN}(x_k) \rVert \leq \epsilon
    \end{align}
    if there exists a training data point $(x^*, \kappa(x^*)) \in \mathcal{D}$ that satisfies $\lVert x_k - x^* \rVert  \leq (L_\mathrm{MPC}(x_k) + L_\mathrm{NN})^{-1} (\epsilon - \epsilon_\mathcal{D})$ and $x^* \in  \mathcal{N}(x_k)$. 
\end{thm}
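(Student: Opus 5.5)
The proof is a triangle-inequality argument that splits the error at $x_k$ through the nearby training point $x^*$. For the data point $(x^*,\kappa(x^*))\in\mathcal D$ from the hypothesis, we write
\begin{align}
\kappa(x_k)-\kappa_\mathrm{NN}(x_k) ={}& \bigl(\kappa(x_k)-\kappa(x^*)\bigr) + \bigl(\kappa(x^*)-\kappa_\mathrm{NN}(x^*)\bigr) \\
&+ \bigl(\kappa_\mathrm{NN}(x^*)-\kappa_\mathrm{NN}(x_k)\bigr),
\end{align}
and bound the norm of each of the three terms separately.

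The first term is handled by the corollary to Proposition~\ref{prop:diff_opt_solution}. By hypothesis, $x_k$ satisfies LICQ, SCS and SOSC, so the proposition yields the neighborhood $\mathcal N(x_k)$. On this neighborhood $\kappa$ is the first block of the continuously differentiable map $\mathbf u(\cdot)$, so it is Lipschitz there with constant $L_\mathrm{MPC}(x_k)$. Since both $x_k$ and $x^*$ lie in $\mathcal N(x_k)$, we obtain $\|\kappa(x_k)-\kappa(x^*)\|\le L_\mathrm{MPC}(x_k)\|x_k-x^*\|$.

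The requirement $x^*\in\mathcal N(x_k)$ is the one point that needs care. The corollary's Lipschitz bound is only local, and the argument would break if $x^*$ fell outside the region of constant active set. The theorem assumes membership explicitly, so this is not an obstacle here. Still, I would note that $\mathcal N(x_k)$ may be taken convex, for instance a small ball intersected with $\mathcal X$. This makes the mean-value argument giving the local Lipschitz constant valid along the segment from $x_k$ to $x^*$. If $\mathcal N(x_k)$ is not convex, one simply relies on the corollary exactly as stated.

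The second term is at most $\epsilon_\mathcal D$ by Assumption~\ref{ass:nn_training}, because $x^*$ belongs to the training set. The third term is at most $L_\mathrm{NN}\|x_k-x^*\|$ by Assumption~\ref{ass:nn_lipschitz}; this uses $x_k,x^*\in\mathcal X$, which holds since $\mathcal N(x_k)\subset\mathcal X$.

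Summing the three bounds gives
\begin{equation}
\|\kappa(x_k)-\kappa_\mathrm{NN}(x_k)\|\le \bigl(L_\mathrm{MPC}(x_k)+L_\mathrm{NN}\bigr)\|x_k-x^*\|+\epsilon_\mathcal D .
\end{equation}
Inserting the distance condition $\|x_k-x^*\|\le (L_\mathrm{MPC}(x_k)+L_\mathrm{NN})^{-1}(\epsilon-\epsilon_\mathcal D)$, which is meaningful because $\epsilon>\epsilon_\mathcal D$, the right-hand side is at most $\epsilon$. This is exactly \eqref{eq:error_bound}.
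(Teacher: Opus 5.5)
Your proposal is correct and follows the paper's proof essentially verbatim. It uses the same three-term triangle-inequality split through $x^*$, bounded by the local Lipschitz constant $L_\mathrm{MPC}(x_k)$ on $\mathcal N(x_k)$, the training accuracy $\epsilon_\mathcal D$ from Assumption~\ref{ass:nn_training}, and $L_\mathrm{NN}$ from Assumption~\ref{ass:nn_lipschitz}, and then inserts the distance condition; your remark about taking $\mathcal N(x_k)$ convex is a harmless refinement that the paper sidesteps by invoking the corollary directly.
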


\begin{pf}
    We begin by writing the left-hand side of \eqref{eq:error_bound} as
    \begin{align*}
        \lVert  \kappa(x_k) - \kappa_\mathrm{NN}(x_k) \rVert =~ & \lVert \kappa(x_k) - \kappa(x^*) + \kappa_\mathrm{NN}(x^*) \\ 
        & - \kappa_\mathrm{NN}(x_k) + \kappa(x^*) - \kappa_\mathrm{NN}(x^*) \rVert.
    \end{align*}
    Applying the triangle inequality, we obtain the upper bound
    \begin{align*}
        \lVert \kappa(x_k) - \kappa_\mathrm{NN}(x_k) \rVert \leq~ & \lVert \kappa(x_k) - \kappa(x^*) \rVert \\
        &+ \lVert \kappa_\mathrm{NN}(x^*) - \kappa_\mathrm{NN}(x_k) \rVert \\
        & + \lVert \kappa(x^*) - \kappa_\mathrm{NN}(x^*) \rVert.
    \end{align*}
    By Assumption \ref{ass:nn_lipschitz}, and as $x^* \! \in \! \mathcal{N}(x_k)$, we obtain further that
    \begin{align*}
        \lVert \kappa_\mathrm{NN}(x^*) - \kappa_\mathrm{NN}(x_k) \rVert & \leq L_\mathrm{NN} \lVert x^* - x_k \rVert, \\
        \lVert \kappa(x_k) - \kappa(x^*) \rVert & \leq L_\mathrm{MPC}(x_k) \lVert x_k - x^* \rVert.
    \end{align*}
    Now, it follows straightforwardly that
    \begin{align*}
        \lVert \kappa(x_k) - \kappa_\mathrm{NN}(x_k) \rVert \leq~ & (L_\mathrm{MPC}(x_k) + L_\mathrm{NN}) \lVert x_k - x^* \rVert \\
        & + \epsilon_\mathcal{D},
    \end{align*}
    which, together with the requirement that $\lVert x_k - x^* \rVert \leq (L_\mathrm{MPC}(x_k) + L_\mathrm{NN})^{-1} (\epsilon - \epsilon_\mathcal{D})$, implies the result in \eqref{eq:error_bound}.
\end{pf}

We now turn to a detailed discussion of the conditions in Proposition~\ref{prop:diff_opt_solution}.%, which remain critical for the validity of our main result.

\begin{rem}\label{rem:licq_scs_sosc}
Assuming that LICQ, SOSC, and SCS hold can be restrictive. 
However, these assumptions are often employed in explicit MPC and parametric optimization to ensure that the optimal solution is well-behaved, see, e.g., \cite{bemporad2002} or \cite{zanon2021}. 
LICQ is typically ensured by design of the constraints \eqref{eq:ocp_input_constraints}, \eqref{eq:ocp_state_constraints}, and \eqref{eq:ocp_terminal_constraint}, or via regularization or constraint relaxation techniques, see, e.g., \cite{rawlings1999,rawlings2017}. 
SCS is usually only lost when arbitrarily small perturbations of the state $x_k$ lead to active set changes.
However, the sets on which this is the case are typically lower-dimensional subsets of $\mathcal{X}$, and hence, Lebesgue null sets.
As a result, it is unlikely in practice that a state sample lies in these sets. Satisfaction of the SOSC is more challenging; however, our approach allows for violations of the SOSC on Lebesgue-null sets, analogous to the SCS condition. For a more comprehensive overview of the regularity properties of parametric optimization problems, see, e.g., \cite{fiacco1990}.
\end{rem}

To better understand the implications of Theorem \ref{thm:deltaeta}, we proceed by discussing some of its practical aspects.
\begin{rem}\label{rem:eta_epsilon}
If an upper bound $L_\mathrm{MPC}^\mathrm{max}$ on the local Lipschitz constants $L_\mathrm{MPC}(x_k)$ can be estimated over all regions of the state space where the MPC feedback law is locally Lipschitz continuous, Theorem~\ref{thm:deltaeta} generalizes accordingly.
Specifically, a state-independent distance bound $\delta := (L_\mathrm{MPC}^\mathrm{max} + L_\mathrm{NN})^{-1} (\epsilon - \epsilon_\mathcal{D})$ can then be used to establish the result\rev{, at the cost of potentially increased conservatism}.
Hence, Theorem \ref{thm:deltaeta} provides a condition on the minimum coverage of $\mathcal{X}$ that must be achieved by the training inputs.
\rev{To enforce the distance condition in Theorem~\ref{thm:deltaeta}, one can choose a parameter $0<\eta\leq\delta$ and construct the training dataset to provide an $\eta$-coverage of $\mathcal{X}$. However, $\eta$-coverage alone does not ensure that the corresponding training point lies in the same neighborhood of local Lipschitz continuity as $x_k$. 
These regions are determined by the underlying MPC problem and may have arbitrary geometry; in particular, they need not be norm balls or have a uniform radius. For each state $x_k$, the relevant neighborhood is the one associated with the corresponding region of the MPC problem. The result therefore remains conditional on the existence, for each considered state $x_k$, of a training point that satisfies both the distance and neighborhood conditions of Theorem~\ref{thm:deltaeta}.}
%To ensure that, for any $x_k \in \mathcal{X}$ that satisfies the regularity conditions in Proposition \ref{prop:diff_opt_solution}, a suitable training data point $(x^*, \kappa(x^*))$ exists, one can choose a parameter $0 < \eta \leq \delta$, and construct the training dataset so that the training inputs provide an $\eta$-coverage of $\mathcal{X}$.
The parameter $\eta$ thus acts as a tuning parameter, which must be selected heuristically or via appropriate tuning strategies tailored to the specific application.
%Note that the generalization of Theorem \ref{thm:deltaeta} to the entire state space $\mathcal{X}$ is, in a sense, conservative, as it is practically sufficient to establish the error bound --- and, consequently, the desired feasibility and stability properties under the NN controller --- only for the set of states reachable in closed-loop for the given task.
\end{rem}

\subsection{\rev{Recursive} feasibility and Stability Guarantees}
\label{subsec:repfeas_and_stab}

Building on the results of Theorem \ref{thm:deltaeta}, we can now guarantee safety, i.e., constraint satisfaction, and stability of the system \eqref{eq:systemdesc} under the neural network controller.
Specifically, Theorem \ref{thm:deltaeta} provides conditions under which a neural network satisfies a prescribed bound on the approximation error, which can be used to design the underlying MPC to be robust against such errors, as described in Section \ref{sec:problem}.
This robustness enables safe substitution of the MPC with the neural network controller.
In particular, if the conditions of Theorem \ref{thm:deltaeta} are satisfied (point-wise) along the closed-loop state trajectory of system \eqref{eq:systemdesc} under the neural network controller, the safety and stability guarantees established in Theorem \ref{thm:mpc_feas_stab} are preserved.
Moreover, under these conditions, the neural network controller maintains the invariance of the feasible set of initial conditions for the closed-loop system, as established in Corollary \ref{cor:invariance}.

The following theorem provides a formal statement of the result under practical and realistically attainable conditions, explicitly accounting for stochastic disturbances that arise in closed-loop operation, which results in a probabilistic view on the results in Proposition~\ref{prop:diff_opt_solution} and Theorem~\ref{thm:deltaeta}.

\begin{thm}\label{thm:closed_loop_guarantees}
    Suppose the following conditions hold:
    \begin{enumerate}
    
        \item[(i)] Assumptions~\ref{ass:system_dynamics}, \ref{ass:constraint_sets}, \ref{ass:cost_function}, \ref{ass:terminal}, \ref{ass:nn_lipschitz}, and \ref{ass:nn_training} hold.

        \item[(ii)] The system \eqref{eq:systemdesc} is subject to additive, bounded, independent, and identically distributed noise $\nu_k \in \mathcal{W_\nu} \subset \mathbb{R}^{n_x}$, where 
        \begin{enumerate}
            \item[(a)] $\mathcal{W}_\nu$ is compact and satisfies $0 \in \mathrm{int}(\mathcal{W}_\nu)$, and
            \item[(b)] $\nu_k$ follows a continuous probability measure $\mu_\nu$ that is absolutely continuous with respect to the Lebesgue measure $\lambda$, i.e., $\mu_\nu \ll \lambda$.
        \end{enumerate}

        \item[(iii)] The MPC problem \eqref{eq:ocp} is designed to be robust with respect to the additive, bounded, total disturbance $\bar{w}_k := w_k + \nu_k$, where $w_k$ is defined as in \eqref{eq:system_disturbed}.

        \item[(iv)] The set $\mathcal{Z}:= \{ x \in \mathcal{X}_N \mid ~\text{LICQ, SCS, or SOSC is vio-} \allowbreak \text{lated at}~ (\mathbf{u}^*, x) \}$, where $\mathbf{u}^*$ is the optimal solution of \eqref{eq:ocp} with initial condition $x$, satisfies $\lambda(\mathcal{Z})=0$, i.e., the set of states where regularity fails is a null set with respect to the Lebesgue measure.

        \item[(v)] The initial state satisfies $x_0 \in \mathcal{X}_N \setminus \mathcal{Z}$, i.e., the MPC problem is initially feasible, and the feedback law is regular at $x_0$.

        \item[(vi)] For each $x \in \mathcal{X}_N \setminus \mathcal{Z}$, there exists a training data point $(x^*, \kappa(x^*)) \in \mathcal{D}$ that satisfies $\| x - x^* \| \leq (L_\mathrm{MPC}^\mathrm{max} + L_\mathrm{NN})^{-1}(\epsilon - \epsilon_\mathcal{D})$ \rev{and $x^* \in  \mathcal{N}(x)$}.
        
    \end{enumerate}
    Then, the neural network controller 
    \begin{enumerate}
        \item[(1)] remains feasible along the closed-loop trajectory almost surely, and
        \item[(2)] renders the closed-loop system almost surely input-to-state stable with respect to the origin and the disturbance $\bar{w}_k$.
    \end{enumerate}
\end{thm}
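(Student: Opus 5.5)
The plan is an induction along the closed-loop trajectory $x_{k+1} = f(x_k,\kappa_\mathrm{NN}(x_k)) + \nu_k$. The inductive hypothesis is the event $E_k := \{x_j \in \mathcal{X}_N \setminus \mathcal{Z} \text{ for all } j \le k\}$. Two facts will be shown: on $E_k$ the approximation error at time $k$ is at most $\epsilon$, and $\mathbb{P}(E_{k+1} \mid E_k) = 1$. Once $\mathbb{P}(E_k)=1$ holds for every $k$, a countable intersection gives that the trajectory lies in $\mathcal{X}_N\setminus\mathcal{Z}$ for all $k$ almost surely. Claims (1) and (2) then follow from Theorem~\ref{thm:mpc_feas_stab} and Corollary~\ref{cor:invariance}, applied with the total disturbance $\bar w_k$ as in condition (iii). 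The base case is condition (v).

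\emph{Step 1 (error bound on $E_k$).} Take $x_k \in \mathcal{X}_N\setminus\mathcal{Z}$. Then LICQ, SCS and SOSC hold at $(\mathbf{u}^*,x_k)$, so Proposition~\ref{prop:diff_opt_solution} and its corollary supply a neighborhood $\mathcal{N}(x_k)$ with local constant $L_\mathrm{MPC}(x_k)\le L_\mathrm{MPC}^\mathrm{max}$. Condition (vi) provides a training point $x^*\in\mathcal{N}(x_k)$ with $\|x_k-x^*\|\le (L_\mathrm{MPC}^\mathrm{max}+L_\mathrm{NN})^{-1}(\epsilon-\epsilon_\mathcal{D})\le (L_\mathrm{MPC}(x_k)+L_\mathrm{NN})^{-1}(\epsilon-\epsilon_\mathcal{D})$. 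Theorem~\ref{thm:deltaeta} then gives $\|e(x_k)\|\le\epsilon$, and \eqref{eq:system_disturbed} gives $\|w_k\|\le L_{f,u}\epsilon$. The actual closed loop can therefore be written as $x_{k+1}=f(x_k,\kappa(x_k))+\bar w_k$ with $\bar w_k$ in the admissible disturbance set of (iii). Corollary~\ref{cor:invariance}, robustified for $\bar w_k$, yields $x_{k+1}\in\mathcal{X}_N$ deterministically.

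\emph{Step 2 (avoiding $\mathcal{Z}$).} This is the main obstacle. Condition on $\mathcal{F}_k=\sigma(x_0,\nu_0,\dots,\nu_{k-1})$, so that $x_k$ and $c_k:=f(x_k,\kappa_\mathrm{NN}(x_k))$ are fixed. Since $\nu_k$ is independent of $\mathcal{F}_k$,
\begin{equation*}
\mathbb{P}(x_{k+1}\in\mathcal{Z}\mid\mathcal{F}_k)=\mu_\nu(\mathcal{Z}-c_k).
\end{equation*}
The translate $\mathcal{Z}-c_k$ is still Lebesgue-null by translation invariance of $\lambda$, so absolute continuity $\mu_\nu\ll\lambda$ (condition (ii)(b)) makes this probability zero. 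Taking expectations gives $\mathbb{P}(E_{k}\cap\{x_{k+1}\in\mathcal{Z}\})=0$. Two technical points need care. First, $\mathcal{Z}$ has to be measurable; otherwise I would replace it by a Borel null superset, or argue with outer measure. Second, the noise must enter additively after the map $f$, which is exactly what makes the shift argument work; nothing beyond measurability of $c_k$ is needed.

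\emph{Step 3 (conclusion).} On the full-probability event $\bigcap_k E_k$, Step 1 shows that the NN closed loop coincides with a realization of the robust MPC closed loop \eqref{eq:system_disturbed} driven by an admissible disturbance sequence $\bar w_k$. Recursive feasibility of \eqref{eq:ocp} at every $x_k$ therefore follows from Theorem~\ref{thm:mpc_feas_stab}, and since $\mathcal{X}_N\subset\mathcal{X}$ and the input tightening \eqref{eq:ocp_input_constraints} absorbs $\|e_k\|\le\epsilon$, the constraints are satisfied; this is claim (1). For claim (2), the ISS-Lyapunov inequality of the robust MPC, with the value function as Lyapunov function, holds at every step of such a realization, so the ISS estimate with respect to $\bar w_k$ holds on $\bigcap_k E_k$, i.e.\ almost surely.
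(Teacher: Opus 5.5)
Your proposal is correct and follows essentially the same route as the paper. It proceeds by induction from $x_0\in\mathcal{X}_N\setminus\mathcal{Z}$; the translated, absolutely continuous noise law assigns zero mass to the null set $\mathcal{Z}$; condition (vi) together with Theorem~\ref{thm:deltaeta} gives the error bound at every visited state; and Theorem~\ref{thm:mpc_feas_stab} then transfers to the total disturbance $\bar w_k$. Your ordering inside the inductive step is slightly tighter than the paper's: you bound the error at $x_k$ first, then use robust invariance to place $x_{k+1}$ in $\mathcal{X}_N$, then condition on $\mathcal{F}_k$ to exclude $\mathcal{Z}$. The paper instead asserts $x_{k+1}\in\mathcal{X}_N\setminus\mathcal{Z}$ before establishing the error bound that this invariance step requires.
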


\begin{pf}
    The proof proceeds in three main steps.
    First, we revisit the robust MPC design under the presence of an additional disturbance, as specified in condition (iia), and demonstrate the feasibility of the robust MPC formulation required by condition (iii).
    Second, we show that the feedback law of this robust MPC is almost surely sufficiently regular at the sampled closed-loop states $x_k$ where it is evaluated, leveraging conditions (iia) and (iv).
    This guarantees the local Lipschitz continuity of the feedback law in neighborhoods around the observed closed-loop states.
    Third, leveraging conditions (i) and (v), we establish that the approximation error at any observed closed-loop state satisfies the bound stated in Theorem \ref{thm:deltaeta}.
    Together, these results imply that the guarantees provided by the robust MPC design in Theorem \ref{thm:mpc_feas_stab} transfer to the neural network controller.

    \textbf{Step 1:} We begin by considering the system \eqref{eq:systemdesc} subject to the additive disturbance $\nu_k$, given by $x_{k+1} = f(x_k, u_k) + \nu_k$.
    %\begin{align*}
    %    x_{k+1} = f(x_k, u_k) + \nu_k.
    %\end{align*}
    
    Analogously to the formulation in \eqref{eq:system_disturbed}, we define a single additive disturbance term $\bar{w}_k$ that captures both the effect of the approximation error induced by the neural network controller and the state disturbance $\nu_k$.
    Specifically,
    \begin{align*}
        x_{k+1} & = f(x_k, u_k+e_k) + \nu_k \equiv f(x_k, u_k) + \bar{w}_k,~\text{with} \\
        \bar{w}_k & = f(x_k, u_k+e_k) - f(x_k, u_k) + \nu_k.
    \end{align*}
    This yields the bound \(\| \bar{w}_k \| \! \leq \! \| f(x_k, u_k \! + \! e_k) \! - \! f(x_k, u_k) \| \! + \! \| \nu_k \|  \leq L_{f,u} \epsilon + \epsilon_\nu\), where $\epsilon_\nu > 0$ is a constant such that $\mathcal{W}_\nu \subseteq \{ \nu \mid \| \nu \| \leq \epsilon_\nu \}$.
        %\begin{align*}
    %    \| \bar{w}_k \| \! \leq \! \| f(x_k, u_k \! + \! e_k) \! - \! f(x_k, u_k) \| \! + \! \| \nu_k \|  \leq L_{f,u} \epsilon + \epsilon_\nu,
    %\end{align*
    Such a constant $\epsilon_\nu > 0$ exists due to the compactness of $\mathcal{W}_\nu$.
    Including \rev{$\epsilon_\nu$} in \rev{\eqref{eq:ocp_error_set}} yields
    \begin{align*}
        \mathcal{W}_n \! = \! \Bigg \{ \! e \! \in \! \mathbb{R}^{n_x} \mid \| e \| \! \leq \! \frac{L_{f,x}^n \! - \! 1 }{L_{f,x} \! - \! 1} \left( L_{f,u} \epsilon + \epsilon_\nu \right) \! \Bigg \}. \\
    \end{align*}
    This establishes the feasibility of the robust MPC design required in condition (iii), which is henceforth assumed.

    \textbf{Step 2:} Given condition (iia), for a fixed state $x_k \in \mathcal{X}_N \setminus \mathcal{Z}$, the successor state $x_{k+1}$ is a random variable determined by the disturbance $\nu_k$ and follows the probability measure $\mu_x$.
    Specifically, $\mu_x$ is the pushforward of $\mu_\nu$ under the translation map $T(z) = z + f(x_k,u_k+e_k)$.
    Since the Lebesgue measure $\lambda$ is translation invariant and $\mu_\nu \ll \lambda$ by assumption, it follows that $\mu_x \ll \lambda$, i.e., $\mu_x$ is absolutely continuous with respect to the Lebesgue measure.
    Together with condition (iv), this yields $\mu_x(\mathcal{Z}) = 0$, or equivalently, $\mathrm{Pr}[x_{k+1} \in \mathcal{X}_N \setminus \mathcal{Z}] = 1$ almost surely.
    By induction, and using condition (v), it follows that $x_k \in \mathcal{X}_N \setminus \mathcal{Z}$ for all $k$, almost surely.
    Consequently, the regularity conditions of Proposition \ref{prop:diff_opt_solution} are almost surely satisfied at all visited closed-loop states, implying local Lipschitz continuity of the MPC feedback law in neighborhoods of these states.

    \textbf{Step 3:} The established (local) Lipschitz continuity of the MPC feedback law in neighborhoods of all visited closed-loop states, together with conditions (i) and (vi), directly implies that the error at these states almost surely satisfies the bound stated in Theorem \ref{thm:deltaeta}.
    Since, by condition (iii), the underlying MPC is robust with respect to both the approximation error induced by the neural network controller and the additive disturbance $\nu_k$, the closed-loop guarantees established in Theorem \ref{thm:mpc_feas_stab} almost surely carry over to the neural network controller.
\end{pf}
\rev{We conclude the theoretical analysis by briefly discussing the conditions of Theorem~\ref{thm:closed_loop_guarantees}.
Condition (i) requires Assumptions~\ref{ass:system_dynamics}, \ref{ass:constraint_sets}, \ref{ass:cost_function}, \ref{ass:terminal}, \ref{ass:nn_lipschitz}, and \ref{ass:nn_training}, whose implications and restrictiveness were discussed upon introduction.}

\rev{Condition (ii) assumes bounded additive disturbances with a continuous distribution of compact support, encompassing, e.g., model-plant mismatch, external disturbances, and measurement noise. Condition (iia) is standard in robust and stochastic MPC, while (iib) covers, among others, uniform and truncated Gaussian distributions; the distribution itself need not be known explicitly or analytically tractable.}

\rev{Condition (iii) requires robustness of the MPC against the combined disturbance, achievable through standard constraint tightening as described in Section~\ref{subsec:rmpc}.}

\rev{Condition (iv) requires the irregularity set $\mathcal{Z}$ to have Lebesgue measure zero, supported by Remark~\ref{rem:licq_scs_sosc} and empirical evidence reported in ~\cite{zanon2021}.}

\rev{Finally, condition (v) requires initial MPC feasibility and excludes states where the feedback law lacks regularity.}

\subsection{Procedure for Certified NN Approximation}
\label{sec:pipline}

We now present a procedure to generate a NN controller that provides the guarantees discussed in Section~\ref{subsec:repfeas_and_stab} \rev{and discuss the limitations of implementing this method in a practical setting.}

A possible implementation is summarized in Algorithm~\ref{alg:pipeline}. The resulting NN is fixed at run time and does not require any online adaptation or re-training. \rev{In particular, once provisional bounds on the approximation error and the Lipschitz constants of the MPC law and the NN have been selected, sampling and training may be carried out in a single offline procedure, provided that the resulting network satisfies the prescribed bounds.} Thus, enlarging the training dataset does not affect online execution times, as the evaluation time at deployment depends only on the NN architecture.

We first specify a bound $\epsilon_\mathcal{D}$ on the expected error at the training points of the NN with respect to the underlying MPC law $\kappa(x)$. This bound may be chosen based on domain knowledge of the problem, an engineering guess, or a single preliminary NN training on an arbitrary initial dataset $\mathcal{D}_0$ from a similar domain. Given such a bound, we select the design parameter $\epsilon$ in Theorem~\ref{thm:deltaeta} so that $\epsilon \geq \epsilon_\mathcal{D}$, which can be chosen larger to include additional robustness margins.

Once $\epsilon$ has been fixed, we design the robust MPC as in Section~\ref{subsec:rmpc}. At this stage, state disturbances $\epsilon_\nu$ can also be incorporated. This yields a robust MPC feedback law $\kappa(x)$.

%Given the MPC design, we next provide bounds on the local Lipschitz constants of both the NN and the MPC on $\mathcal{X}$. For the MPC, we obtain an upper bound $L_\mathrm{MPC}$ on the local Lipschitz constants of the feedback law~\eqref{eq:mpcmap}, for instance via power iteration, finite-difference schemes, or specialized methods (see, e.g., \cite{teichrib2023,pauli2021}). Similarly, we fix a bound $L_\mathrm{NN}$ on the Lipschitz constant of the NN architecture to be used for approximation. In parallel, we estimate the set of initial conditions $\mathcal{X}_N$ for the robustly forward invariant set for the closed-loop system, since the closed-loop MPC law~\eqref{eq:mpcmap} is only defined on $\mathcal{X}_N$.% and training data must be drawn from this set.
Given the MPC design, we next provide bounds on the local Lipschitz constants of both the NN and the MPC on $\mathcal{X}$.
\rev{An upper bound $L_\mathrm{MPC}$ for the feedback law~\eqref{eq:mpcmap} may be obtained via power iteration, finite-difference schemes, or specialized methods~\cite{teichrib2023,pauli2021}. Similarly, we fix $L_\mathrm{NN}$ for the NN architecture and estimate the robustly
forward-invariant set $\mathcal{X}_N$, on which the MPC law is defined and from which the training data are drawn.}

With the bounds $\epsilon_\mathcal{D}$, $L_\mathrm{NN}$, and $L_\mathrm{MPC}$ specified and Theorem~\ref{thm:deltaeta}, we compute the coverage radius $\delta$. We then select a radius $\eta \leq \delta$ for the over-coverage. Having fixed $\eta$, we employ sub-sampling or space-filling methods \cite{kiss2024} to compute an $\eta$-coverage of $\mathcal{X}_N$, which is also a $\delta$-coverage of $\mathcal{X}_N$. Evaluating the MPC feedback law~\eqref{eq:mpcmap} on these points in open-loop yields the final training dataset $\mathcal{D}$.

%Finally, we initialize the neural network with the chosen architecture and train it once on the fixed dataset $\mathcal{D}$. \rev{The training objective is to obtain a NN controller $\kappa_\mathrm{NN}$ whose realized in-sample error $\hat \epsilon_\mathcal{D}$ satisfies $\hat{\epsilon}_{\mathcal D} := \max_{(x_i,\kappa(x_i))\in\mathcal D} \left\| \kappa(x_i)-\kappa_{\mathrm{NN}}(x_i) \right\| \leq \epsilon_{\mathcal D}$ and whose realized  $\hat L_\mathrm{NN}$ satisfies $\hat L_\mathrm{NN} \leq L_\mathrm{NN}$. Since both the realized in-sample error and Lipschitz constant are generally known only after training, additional hyperparameter tuning or retraining may be required if the prescribed bounds are not met. Any network that meets these two design inequalities, however, is admissible for our analysis, and no further iterations or refinements are required.}

\rev{We then initialize the neural network with the chosen architecture
and train it once on the fixed dataset $\mathcal{D}$. The trained network must satisfy
$\hat{\epsilon}_{\mathcal D}
:= \max_{(x_i,\kappa(x_i))\in\mathcal D}
\|\kappa(x_i)-\kappa_{\mathrm{NN}}(x_i)\|
\leq \epsilon_{\mathcal D}$
and $\hat L_\mathrm{NN}\leq L_\mathrm{NN}$.
Because these quantities are known only after training, hyperparameter
tuning or retraining may be required. Any network satisfying both
inequalities is admissible, and no further refinement is needed.}

\begin{algorithm}[!ht]
\caption{Generation of NN Controller}
\label{alg:pipeline}
\begin{algorithmic}[1]
\Require Constraint set $\mathcal{X}$.

\State \textbf{Specify or estimate} a nominal approximation error bound $\epsilon_\mathcal{D}$ 

\State \textbf{Choose} a design approximation error bound $\epsilon \geq \epsilon_\mathcal{D}$.

\State \textbf{Design} a robust MPC using $\epsilon$ and, optionally, a disturbance bound $\epsilon_\nu$, obtaining the feedback law $\kappa(x)$.

\State \textbf{Specify} Lipschitz bounds $L_{\mathrm{MPC}}$ and $L_{\mathrm{NN}}$ on $\mathcal{X}$.

\State \textbf{Estimate} the robustly forward invariant set $\mathcal{X}_N$ of the robust MPC.

\State \textbf{Compute} the required coverage radius $ \delta = \frac{\epsilon - \epsilon_\mathcal{D}}{L_{\mathrm{MPC}} + L_{\mathrm{NN}}}$ according to Theorem~\ref{thm:deltaeta}.

\State \textbf{Select} a radius $\eta \leq \delta$ for over-coverage.

\State \textbf{Compute} an $\eta$-coverage of $\mathcal{X}_N$ and form the training dataset $\mathcal{D}$ by evaluating $\kappa(x)$ on the coverage points.

\State \textbf{Train} \rev{the neural network $\kappa_\mathrm{NN}$ on $\mathcal{D}$, such that it satisfies $\hat \epsilon_\mathcal{D} \leq\epsilon_\mathcal{D} \leq \epsilon$ and $\hat L_\mathrm{NN} \leq L_\mathrm{NN}$}.

\State \textbf{Output} the final NN controller $\kappa_\mathrm{NN}(x)$, which approximates $\kappa(x)$ on $\mathcal{X}_N$ with error bound $\epsilon$.
\end{algorithmic}
\end{algorithm}
%\rev{
%\begin{rem}
%In practice, coverage-based certification is subject to the curse of dimensionality. For a bounded \(n_x\)-dimensional domain with characteristic width \(R\), the number of samples required to achieve a coverage radius \(\delta\) scales approximately as
%$
%n_{\mathcal D}
%=
%\mathcal{O}\!\left(
%\left(\delta^{-1} R \right)^{n_x}
%\right)
%$.
%Thus, maintaining a fixed coverage radius requires rapidly increasing offline sampling and training effort as the state dimension grows, while the online evaluation time remains determined by the NN architecture. Additionally, scalability depends not only on $n_x$, but also on the required radius $\delta$, and therefore on the acceptable approximation error margin and the Lipschitz bounds of the MPC and NN.
%\end{rem}
%}
%\vspace{-12pt}
\rev{
\begin{rem}
In practice, coverage-based certification suffers from the curse of dimensionality: for a bounded $n_x$-dimensional domain with characteristic width $R$, the required number of samples scales as $n_{\mathcal D}=\mathcal{O}((R/\delta)^{n_x})$.
Thus, maintaining a fixed coverage radius causes an increasing offline sampling and training effort with $n_x$, whereas online evaluation remains determined by the NN architecture. Scalability also depends on $\delta$, and hence on the acceptable approximation-error margin and the Lipschitz bounds.
\end{rem}
}
%%%%%%%%%%%%%%%%%%%%%%%%%%%%%%%%%%%%%%%%%%%%%%%%%%%%%%%%%%%%%%%%%%%%%%%%%%%%%%%%
\section{Example}\label{sec:ex} 

\rev{We evaluate the approach on an isothermal continuously stirred tank reactor in which the reaction \(2A+B\rightarrow C\), \(C\rightarrow 2D\), and the selective removal of \(D\) are realized. The continuous-time dynamics are
\begin{align}
    \dot c_A
    &= d\left(c_{A,f}-c_A\right)
       -2k_1 c_A^2c_B, \\
    \dot c_B
   & =  d\left(c_{B,f}-c_B\right)
       -k_1 c_A^2c_B, \\
    \dot c_C
    &= - d c_C
       +k_1 c_A^2c_B-k_2c_C, \\
    \dot c_D
    &= - d c_D
       +2k_2c_C-k_{\mathrm{sep}}c_D .
\end{align}
Here, $x=[c_A,c_B,c_C,c_D]$ denotes the species concentrations in
$\SI{}{\mol\per\liter}$, and $u=c_{A,f}$ is the controlled feed
concentration of species $A$ in $\SI{}{\mol\per\liter}$.
The dilution rate is $d=\SI{0.1}{\per\minute}$, and the fixed feed concentration is $c_{B,f}=\SI{0.75}{\mol\per\liter}$. The reaction parameters are
$k_1=\SI{3.2}{L^2\per\minute^2}$
$k_2=\SI{0.1}{\per\minute}$ and
$k_{\mathrm{sep}}=\SI{0.1}{\per\minute}$.
The model is discretized using fourth-order Runge--Kutta with
$T_s=\SI{0.5}{\minute}$.}
%I took this out as people should know an RK4. We can add it if their is space.

%Consequently, the disturbed discrete-time model is
%\begin{align}
% x_{1,k+1} &= x_{1,k}
% +\frac{T_s}{6}\left(\kappa_{1,k}^{(1)}
% +2\kappa_{2,k}^{(1)}+2\kappa_{3,k}^{(1)}
% +\kappa_{4,k}^{(1)}\right)+\tilde\nu_{1,k},\\
% x_{2,k+1} &= x_{2,k}
% +\frac{T_s}{6}\left(\kappa_{1,k}^{(2)}
% +2\kappa_{2,k}^{(2)}+2\kappa_{3,k}^{(2)}
% +\kappa_{4,k}^{(2)}\right)+\tilde\nu_{2,k},\\
% x_{3,k+1} &= x_{3,k}
% +\frac{T_s}{6}\left(\kappa_{1,k}^{(3)}
% +2\kappa_{2,k}^{(3)}+2\kappa_{3,k}^{(3)}
% +\kappa_{4,k}^{(3)}\right)+\tilde\nu_{3,k},\\
% x_{4,k+1} &= x_{4,k}
% +\frac{T_s}{6}\left(\kappa_{1,k}^{(4)}
% +2\kappa_{2,k}^{(4)}+2\kappa_{3,k}^{(4)}
% +\kappa_{4,k}^{(4)}\right)+\tilde\nu_{4,k}.
%\end{align}}
%\rev{with
%\begin{align}
%    \kappa_{1,k} &= f(x_k,u_k), \\
%    \kappa_{2,k} &= f\left(x_k+\frac{T_s}{2}\kappa_{1,k},u_k\right), \\
%    \kappa_{3,k} &= f\left(x_k+\frac{T_s}{2}\kappa_{2,k},u_k\right), \\
%    \kappa_{4,k} &= f\left(x_k+T_s\kappa_{3,k},u_k\right).
%\end{align}}

\rev{Over $\mathcal{X}\times\mathcal{U}$, the numerically estimated Lipschitz constants of the undisturbed discrete dynamics w.r.t. the $2$-norm are $L_{f,x}=1.11$ and $L_{f,u}=0.05$. We add $\tilde \nu_k = [\tilde \nu_{1,k}, \tilde \nu_{2,k},\nu_{3,k},\nu_{4,k}]^\top, $ as an additive disturbance satisfying $\|\tilde \nu_k\|_2\leq \epsilon_\nu$ with $\epsilon_\nu = 0.001$ to the discretized system.}

\subsection{Controller Design}

\rev{The robust MPC uses the constraint sets $\mathcal{X}:=\{x\mid \SI{0.05}{\mole\per\liter}\leq c_\gamma\leq \SI{0.45}{\mole\per\liter},\ \gamma\in\{A,B,C,D\}\}$ and
$\mathcal{U}:=\{u\mid \SI{1.00}{\mole\per\liter}\leq u\leq \SI{1.50}{\mole\per\liter}\}$,
together with the quadratic stage and terminal costs $l(x,u)=(x-x_\mathrm{ref})^\top Q(x-x_\mathrm{ref})+(u-u_\mathrm{ref})^\top R(u-u_\mathrm{ref})$
and $V_T(x)=(x-x_\mathrm{ref})^\top P(x-x_\mathrm{ref})$,
respectively, where $Q=\mathrm{diag}(2,1,1,1)$, $R=1$, and
\begin{align}
P=
\begin{bmatrix}
  8.77 & -5.29 &  3.27 & 1.06 \\
 -5.29 & 14.19 &  2.23 & 0.69 \\
  3.27 &  2.23 & 11.84 & 3.69 \\
  1.06 &  0.69 &  3.69 & 8.26
\end{bmatrix}.
\end{align}
The matrix $P$ is obtained from fixed-point iterations of the discrete-time algebraic Riccati equation for the model linearized at the reference. The terminal level is set to $\alpha_v=0.10$. We use the constant reference $x_\mathrm{ref} = [0.25,\,0.25,\,0.25,\,0.25]^\top \,\si{\mole\per\liter}$ and {$u_\mathrm{ref} = \SI{1.25}{\mole\per\liter}$}. We choose a prediction horizon of $N=10$ steps, corresponding to a prediction horizon of $\SI{5.00}{\minute}$. We conservatively set the designable approximation-error bound to $\epsilon=0.05$, corresponding to $\SI{10}{\percent}$ of the input range.}

\subsection{Neural Network Architecture}
\label{subsec:nnarc}

\rev{We approximate the underlying robust MPC law by a fully connected feed-forward neural network (cf. Section~\ref{subsec:ffnn}) with $3$ hidden layers, each containing $32$ neurons, and Gaussian error linear unit activation functions. The network takes the 4-dimensional concentration state as input and outputs the scalar feed concentration $c_{A,f}$.}
%A sigmoid output transformation enforces the input constraints $u\in[1,1.5]$.

\subsection{Data Generation and Training}
\label{subsec:data_training}

\rev{We seed $\num{194481}=21^4$ points equally spaced within the state constraint set. Using these points, we estimate a grid approximation $\hat{\mathcal{X}}_N\subseteq\mathcal{X}_N$ of the robustly forward-invariant set of the underlying MPC, which leaves $\num{60626}$ strictly feasible and converged grid points.}

\rev{Before dataset generation, we estimate the Lipschitz constant of the MPC policy at $\num{10000}$ points in $\hat{\mathcal{X}}_N$ using central finite differences with a step size of $\num{1e-4}$. This yields $L_{\mathrm{MPC}}=3.1531$. For the NN architecture described in
In Section~\ref{subsec:nnarc}, we prescribe the analytical design bound $L_{\mathrm{NN}}=3.5$, which is subsequently enforced via weight projection during training~\cite{gouk_regularisation_2021}. The projection approach is further explained in supplementary Section~\ref{supp:sec:weight-projection} and in \cite{gouk_regularisation_2021}. We choose the dataset error $\epsilon_{\mathcal D}=0.01$ and the overall approximation-error bound
$\epsilon=0.05$. Therefore, we can calculate $\delta = 0.0060122$, and we choose $\eta=\delta$. For the bounds and all subsequent analysis, the 2-norm was used.}

\rev{We construct a candidate grid over the bounding box of $\hat{\mathcal{X}}_N$ and retain points that pass a membership check with the invariant set. The final $\eta$-cover contains $\num{6417883}$ centers. The MPC policy is evaluated at every center to obtain the final training dataset $\mathcal D$. We then train a newly initialized NN on $\mathcal D$ for $50$ epochs
using a learning rate of $\num{5e-3}$ and a batch size of $128$. The
resulting NN achieves an $R^2$-score of $0.9999$ on a validation set containing $\num{10000}$ points that are distinct from all training points. The maximum approximation error over the complete training dataset is $\hat \epsilon_{\mathcal D} = 1.3759 \times 10^{-3} < 0.01$.
For the trained NNs Lipschitz bound obtained from the layer operator norms and activation Lipschitz constants is $\hat L_{\mathrm{NN}} = 3.4999$. Thus, both quantities remain below the used design values.}

\rev{All training, sampling, and simulations were performed in Python on a Lenovo ThinkPad T14 Gen 2a running Windows 11, equipped with an AMD Ryzen 7 PRO 5850U CPU and \(32\,\mathrm{GB}\) RAM. MPC was implemented using IPOPT via CasADi\cite{Casadi}, and the neural network using PyTorch\cite{pytorch1,pytorch2}.}

\rev{An additional 2-dimensional result is shown in supplementary Section~\ref{supp:sec:2dex}}

\subsection{Results}

\rev{We evaluate the trained NN on the simulated isothermal reactor in closed-loop using the robust MPC used for data generation as a comparison. The closed-loops for both controllers were simulated starting from $100$ independently selected initial states for $160$ steps with a sampling time of $\SI{0.5}{\minute}$. To assess the controller near the boundary of the estimated invariant set, points near the eight coordinate-wise extrema of the retained grid are included explicitly. The remaining initial states are sampled uniformly without replacement from $\hat{\mathcal{X}}_N$. None of the selected initial states were previously used for training or validation.  For each initial state, the robust MPC and NN controller experience the same additional disturbance realization.}

\rev{Fig.~\ref{fig:admissible-trajectories} shows the resulting trajectories in the $(c_A,c_B)$- and $(c_C,c_D)$-projections of $\hat{\mathcal{X}}_N$. The enlarged regions show trajectories initialized close to different boundaries of the estimated admissible set. The trajectories generated by the robust MPC and its NN approximation remain close over the complete simulation horizon and approach the reference.}
\rev{No state-constraint or input-constraint violations are observed for either controller. In particular, all \(\num{16000}\) evaluated states remain inside $\hat{\mathcal{X}}_N$. Set membership is evaluated in the full 4-dimensional state space and not solely from the 2-dimensional projections shown in Fig.~\ref{fig:admissible-trajectories}.}

\rev{Table~\ref{tab:isothermal-comptime} reports the computation times obtained from $\num{16000}$ sequential single-state controller evaluations. Both controllers are evaluated on the same hardware using a single CPU thread, and no batching is employed for the NN controller. The mean computation time decreases from $20.73\,\mathrm{ms}$ for the robust MPC to $0.59\,\mathrm{ms}$ for the NN controller, corresponding to a mean speed-up of approximately $35$ times.}

\rev{The pointwise approximation errors along the same independently generated trajectories are summarized in Table~\ref{tab:isothermal-errors}. The maximum observed error is $8.62\times 10^{-4}$,
which is below the prescribed approximation bound. No violation of the approximation bound occurs among the $16000$ evaluations. The minimum error is $5.73\times10^{-9}$ and is therefore close to the numerical precision relevant to this evaluation.}

\rev{These independent closed-loop results are consistent with the theoretical guarantees established through the underlying robust MPC, the coverage, and the certified approximation bound. In particular, the absence of approximation error bound, state, and input constraint violations supports the conditions of Theorem~\ref{thm:deltaeta} along all evaluated trajectories and demonstrates the applicability of Algorithm~\ref{alg:pipeline}.}

\begin{figure*}[!htb]
  \centering
  \begingroup
  \def\AdmissibleTikzDataDir{figures/tikz/tikz_data}
  \input{figures/tikz/admissible_set_trajectories_with_zooms.tikz}
  \endgroup
  \caption{Projections of the estimated invariant set $\hat{\mathcal{X}}_N$ and 100 closed-loop trajectories under the robust MPC and its NN approximation. Top: $(c_A, c_B)$-plane; bottom: $(c_C, c_D)$-plane. The insets show trajectories initialized near selected boundaries.}
  \label{fig:admissible-trajectories}
\end{figure*}

\begin{table}[bt]
    \centering
    \caption{\rev{Controller computation times over 16,000 evaluations.}}
    \small
    \setlength{\tabcolsep}{4pt}
    \begin{tabular}{|l|ccc|} \hline
         & \rev{Mean (ms)} & \rev{Max. (ms)} & \rev{Min. (ms)} \\ \hline
         \rev{Robust MPC}    & \rev{20.73} & \rev{85.79} & \rev{16.40} \\ \hline
         \rev{NN controller} & \rev{0.59}   & \rev{1.85}   & \rev{0.48}   \\ \hline
    \end{tabular}
    \label{tab:isothermal-comptime}
\end{table}

\begin{table}[bt]
    \centering
    \caption{\rev{Pointwise NN policy error over 16,000 evaluations.}}
    \small
    \setlength{\tabcolsep}{3pt}
    \begin{tabular}{|l|ccc|} \hline
         & \rev{Mean} & \rev{Max.} & \rev{Min.} \\ \hline
         \rev{Approximation error}
         & \rev{$4.76\!\times\!10^{-5}$}
         & \rev{$8.62\!\times\!10^{-4}$}
         & \rev{$5.73\!\times\!10^{-9}$} \\ \hline
    \end{tabular}
    \label{tab:isothermal-errors}
\end{table}

%%%%%%%%%%%%%%%%%%%%%%%%%%%%%%%%%%%%%%%%%%%%%%%%%%%%%%%%%%%%%%%%%%%%%%%%%%%%%%%%
\section{Conclusion}\label{sec:conc}

We developed a constructive framework for certified neural approximation of robust nonlinear model predictive controllers. By interpreting the approximation error as a bounded input disturbance and embedding Lipschitz-based error bounds into a robust MPC design, we established conditions under which recursive feasibility and input-to-state stability of the original controller are preserved under neural replacement. In contrast to post-verification approaches, the proposed method provides an offline design procedure that links dataset coverage and Lipschitz constants to explicit uniform approximation guarantees.

The framework combines results from parametric optimization, robust MPC theory, and Lipschitz continuity of neural networks to derive verifiable error bounds. The simulation study on a nonlinear isothermal CSRT model confirms that the neural controller reproduces the closed-loop behavior of the robust MPC within the certified invariant region while significantly reducing online computation time.

Future work \rev{will investigate} domain partitioning to reduce conservatism induced by global Lipschitz bounds and \rev{data requirements associated with uniform coverage, as well as adaptive or trajectory-based sampling and mixture-of-experts architectures to improve scalability to higher-dimensional systems.} 
\rev{We will also consider relaxing} the regularity assumptions on the MPC feedback law to broaden applicability \rev{of the proposed framework}.
%%%%%%%%%%%%%%%%%%%%%%%%%%%%%%%%%%%%%%%%%%%%%%%%%%%%%%%%%%%%%%%%%%%%%%%%%%%%%%%%
%\rev{\section*{Appendix}}

%%%%%%%%%%%%%%%%%%%%%%%%%%%%%%%%%%%%%%%%%%%%%%%%%%%%%%%%%%%%%%%%%%%%%%%%%%%%%%%%
%\section*{Acknowledgment}

%%%%%%%%%%%%%%%%%%%%%%%%%%%%%%%%%%%%%%%%%%%%%%%%%%%%%%%%%%%%%%%%%%%%%%%%%%%%%%%%

\section*{Declaration of Generative AI and AI-assisted technologies in the writing process}

During the preparation of this work, the authors used ChatGPT \rev{5.5} and \rev{5.6-Sol} from OpenAI in order to improve language flow and wording. After using this tool/service, the authors reviewed and edited the content as needed and take full responsibility for the content of the publication.

%%%%%%%%%%%%%%%%%%%%%%%%%%%%%%%%%%%%%%%%%%%%%%%%%%%%%%%%%%%%%%%%%%%%%%%%%%%%%%%%

%%%%%%%%%%%%%%%%%%%%%%%%%%%%%%%%%%%%%%%%%%%%%%%%%%%%%%%%%%%%%%%%%%%%%%%%%%%%%%%%
% SUPPLEMENTARY MATERIAL -- integrated into a single arXiv-compilable document.
% Labels are namespaced with supp: to avoid collisions with the article.
% Required: figures/coverage.pdf (export the original coverage.svg to PDF).
%%%%%%%%%%%%%%%%%%%%%%%%%%%%%%%%%%%%%%%%%%%%%%%%%%%%%%%%%%%%%%%%%%%%%%%%%%%%%%%%
\begingroup
\setcounter{section}{0}
\setcounter{subsection}{0}
\setcounter{figure}{0}
\setcounter{table}{0}
\setcounter{equation}{0}
\setcounter{thm}{0} % autart shares this counter across lemmas, theorems, etc.
\renewcommand{\thesection}{S\arabic{section}}
\renewcommand{\thesubsection}{\thesection.\arabic{subsection}}
\renewcommand{\thefigure}{S\arabic{figure}}
\renewcommand{\thetable}{S\arabic{table}}
\renewcommand{\theequation}{S\arabic{equation}}
\renewcommand{\thethm}{S\arabic{thm}}

\begin{center}
{\Large\bfseries Supplementary Material\par}
\vspace{1ex}
{\large Certified Neural Approximation of Robust Nonlinear Model Predictive Control\par}
\vspace{1ex}
Hendrik Alsmeier, Maik Pfefferkorn, Rolf Findeisen
\end{center}
\medskip
\noindent\textbf{Summary.}
This document supplements the article ``Certified Neural Approximation of Robust Nonlinear Model Predictive Control.'' Section S1 presents an additional two-dimensional numerical example. Section S2 describes the procedure used to enforce the prescribed neural-network Lipschitz bound during training. Section S3 provides the complete proof of Lemma 9 in the main article.

\section{2-Dimensional Example}\label{supp:sec:2dex} 

In addition to the continuous stirred tank reactor example presented in the main part of the paper, we further evaluate the approach on a nonlinear drone model describing vertical motion. The state $x = [x_1, x_2] = [z, v_z]$ consists of altitude and vertical velocity, and the dynamics are
given by
\begin{align}
    x_{1,k+1} &= x_{1,k} + T_s\, x_{2,k} + \tilde \nu_{1,k}, \\
    x_{2,k+1} &= x_{2,k} + T_s\left(-g - k_{z}\, x_{2,k}^3 + \frac{1}{m}u_k\right) + \tilde \nu_{2,k}.
\end{align}
Here, $u$ is the input thrust force, $k_z=\SI{0.0001}{\second\per\meter\squared}$ is the cubic drag coefficient in $z$-direction, $m=\SI{1}{\kg}$ is the mass of the drone, $g =\SI{9.81}{\meter\per\square\second}$ is the acceleration due to gravity, and $T_s = 0.05$ is the discretization time. Additionally, $\tilde \nu_k = [\tilde \nu_{1,k}, \tilde \nu_{2,k}]^\top$ denotes an additive disturbance satisfying $\|\tilde \nu_k\|_2\leq \epsilon_\nu$ with $\epsilon_\nu = 0.001$. The undisturbed system therefore has Lipschitz constants $L_{f,x} = 1.05$ and $L_{f,u} = 0.05$ w.r.t. the $2$-norm, which can be calculated directly due to its control-affine structure.

\subsection{Controller Design}

For the robust MPC, we define the state constraint set $\mathcal{X}$ by $\SI{-20}{\meter} \leq x_1 \leq \SI{40}{\meter}$ and $\SI{-25}{\meter\per\second}  \leq x_2 \leq \SI{25}{\meter\per\second}$, with positive velocity pointing upward, and the input constraint set $\mathcal{U}$ by $\SI{-30}{\N}  \leq u \leq \SI{30}{\N}$.

We use a quadratic stage cost $l(x,u) = (x-x_\mathrm{ref})^\top Q (x-x_\mathrm{ref}) + (u-u_\mathrm{ref})^\top R (u-u_\mathrm{ref})$, where $Q = \mathrm{diag}(5,1)$ and $R = 0.1$.
For the terminal cost, we choose $V_T(x) = (x-x_\mathrm{ref})^\top P (x-x_\mathrm{ref})$ with
\begin{align}
P=\begin{bmatrix} 74.6404 & 15.9899 \\ 15.9899 & 11.1355 \end{bmatrix}
\end{align}
obtained via fixed-point iterations of the discrete-time algebraic Riccati equation. %, and we fix $\alpha_v = 100$.
The terminal set is defined as the $\alpha_v$-level set of $V_T$ with $\alpha_v = 100$.
We set a constant reference of $x_\mathrm{ref} = [\SI{25.0}{m},\SI{0}{\meter\per\second}]^\top$ and $u_\mathrm{ref} = g \cdot m = \SI{9.81}{\N}$.

% We conservatively set the designable error bound $\epsilon = 3$, which is justifiable compared to the possible range of $u$.
The designable approximation error bound is set to $\epsilon = 3$, which is small relative to the admissible range of $u$.

\subsection{Horizon length and conservatism} For the robust MPC formulation in \eqref{eq:ocp}, the disturbance sets $W_n$ may grow with the prediction step. Their radius $r_n$ is
\begin{equation}
r_n =
\frac{L_{f,x}^{n}-1}{L_{f,x}-1}L_{f,u}(\epsilon+\epsilon_\nu).
\end{equation}
Hence, $W_n$ remains bounded for $L_{f,x}<1$, grows linearly for $L_{f,x}=1$, and may grow geometrically for $L_{f,x}>1$. \\

For the drone model, $L_{f,x}=1.05>1$, and the disturbance sets therefore grow along the prediction horizon. With \(L_{f,u}=0.05\), \(\epsilon=3\), and a chosen $N=30$, the radius of $W_{N-1}=W_{29}$ is approximately
\begin{equation}
r_{29}
=
\frac{1.05^{29}-1}{1.05-1}(0.05 \cdot 3+0.001)
\approx 9.41.
\end{equation}
For comparison, the state domain has widths $60$ and $50$ in the altitude and velocity directions, respectively. Under the 2-norm used in the example, the tightening by $W_{29}$ reduces these widths by \(2r_{29}\approx18.7\), corresponding to reductions of approximately $\SI{31}{\percent}$ and $\SI{37}{\percent}$, respectively. This illustrates that the horizon-dependent tightening is already substantial for $N=30$. We therefore selected this horizon as a practical trade-off between improved look-ahead and increased conservatism. \\

A tube-based MPC formulation with an ancillary feedback controller could modify this propagation by stabilizing the error dynamics.
Under suitable conditions ensuring contraction of the resulting closed-loop error dynamics, the corresponding uncertainty sets could remain bounded or grow substantially more slowly.
However, the use of an ancillary feedback controller alone does not guarantee such behavior; establishing boundedness or contraction would require additional assumptions and analysis beyond the scope of the present framework.

\subsection{Neural Network Architecture}
\label{supp:subsec:nnarc}

We approximate the underlying robust MPC law by a fully connected feed-forward neural network (cf. Section~\ref{subsec:ffnn}) with $3$ hidden layers, each containing $16$ neurons and using a hyperbolic tangent activation function. The network takes the state $x$ as input and outputs the control input $u$. This architecture is used for the initial and final training phases described in Section~\ref{supp:subsec:data_training}.

\subsection{Data Generation and Training}
\label{supp:subsec:data_training}

We seed $\num{20814}$ initial points equally spaced within the state bounds. Using these points, we estimate a robustly forward-invariant set $\hat{\mathcal{X}}_N \subseteq \mathcal{X}_N$ of the underlying MPC using a backwards-pruning approach.

With the same seed points, we perform an initial training of the NN architecture (cf. Section~\ref{supp:subsec:nnarc}) for $200$ epochs with a learning rate of $\num{5e-3}$, which yields an approximation error of $\epsilon_\mathcal{D}=1.9876$, rounded to $\epsilon_\mathcal{D}=2$ in the $2$-norm.

Before dataset generation begins, we estimate the Lipschitz constants of both the MPC and the neural network on $\hat{\mathcal{X}}_N$ using a power-iteration method based on repeated Jacobian–vector and vector–Jacobian products. For the MPC, we obtain $L_\mathrm{MPC} = 19.887$, which we round to $L_\mathrm{MPC} = 20$, and for the initially trained neural
network, we obtain $21.436$, from which we chose $L_\mathrm{NN} = 25$ for caution. This results in $\delta = 0.0222$, based on which we set $\eta = \delta$. We then compute an $\eta$-coverage of $\mathcal{X}_N$ by calculating a bounding box of $\hat{\mathcal{X}}_N$ on which a space-filling sampling algorithm is employed.

This results in $1{\,}231{\,}052$ coverage points $x_i$, which are
used to construct the final training dataset $\mathcal D$. These points provide an $\eta$-coverage of $\hat{\mathcal X}_N$, i.e.,
\begin{equation}
    \sup_{x\in\hat{\mathcal X}_N}
    \min_{x_i\in\mathcal D}
    \lVert x-x_i\rVert_2
    \leq \eta.
\end{equation}
Equivalently, the balls of radius $\eta$ centered at the coverage
points cover $\hat{\mathcal X}_N$. Consequently, two neighboring
coverage points whose corresponding balls overlap are separated by
at most \(2\eta\).
This bound applies to neighboring points; it does not impose a distance bound on arbitrary pairs of coverage points. A cutout of the cover is shown in Fig.~\ref{supp:fig:cover}

\begin{figure}[!hb]
  \centering
  \includegraphics[width=\linewidth]{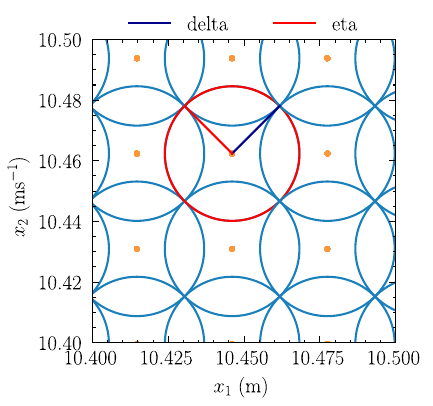}
  \caption{Small cutout of $\hat{\mathcal{X}}_N$ in state space with coverage. Original coverage radius $\delta$ as a black line and red circle. Over-coverage radius $\eta$ as a red line and blue circles. Data points $x_i$ are represented as orange points.}
  \label{supp:fig:cover}
\end{figure}

We then train a newly initialized NN on the dataset $\mathcal{D}$. The training concludes after $100$ epochs with a learning rate of $\num{5e-3}$ and a batch size of $\num{1080}$, achieving an $R^2$-score of $0.9999$ calculated on a validation set of $\num{10000}$ data points, which are distinct from any training data points. Moreover, the trained NN achieves $\hat \epsilon_\mathcal{D} = 1.9222$ and a Lipschitz estimate $L_\mathrm{NN}^{\mathrm{an,final}} = 23.22$, both of which satisfy the initial bounds.

All training, sampling, and simulations were performed in Python on a Lenovo ThinkPad T14 Gen 2a running Windows 11, equipped with an AMD Ryzen 7 PRO 5850U CPU and \(32\,\mathrm{GB}\) RAM. MPC was implemented using IPOPT via CasADi, and the neural network using PyTorch.

\subsection{Results}

We simulate the system in closed-loop, controlled by the robust MPC and its NN approximation for 100 random initial states. 
Specifically, the initial states are drawn from a uniform distribution over $\hat{\mathcal{X}}_N$.
The closed-loop system is simulated for $15\,\mathrm{s}$.
%The resulting trajectories are shown in Fig.~\ref{supp:fig:cl_t} and \ref{supp:fig:cl_stsp}. 
We observe that the state trajectories under the robust MPC and the NN controller remain close over the whole simulation time and all converge to a neighborhood of the reference point (Figure \ref{supp:fig:admissible-trajectories}). Furthermore, we can observe that no constraint violations occur, and that all trajectories stay inside $\hat{\mathcal{X}}_N$.
Table~\ref{supp:tab:comptime} shows the mean, maximum, and minimum computation times for both the MPC and NN controller along the 100 trajectories, indicating a significant speed-up using the NN controller. For the same trajectories, Table~\ref{supp:tab:errors} presents the occurring approximation error in $l_2$-norm, all of which are below our allowed bound of $\epsilon = 3$. The minimal value is given as $0$ since the first observed digit is at the seventh decimal point and therefore not clearly distinguishable from numerical noise.

The results show how safety and stability guarantees established via the underlying MPC transfer to the neural network controller.
Furthermore, the results suggest that Theorem \ref{thm:closed_loop_guarantees} is satisfied along all 100 closed-loop trajectories, indicating that the feedback law of the designed MPC is well-behaved and that the proposed Algorithm \ref{alg:pipeline} is suited to compute the NN approximator.

\begin{table}[!bt]
    \centering
    \caption{Computation time for the robust MPC and NN controller.}
    \begin{tabular}{|l|ccc|} \hline 
         & Mean (ms) & Max (ms) & Min (ms) \\ \hline
         MPC        & 41.79  & 213.98  & 24.65 \\ \hline
         NN approx. & 0.52   & 1.98   & 0.44 \\ \hline
    \end{tabular}
    \label{supp:tab:comptime}
\end{table}

\begin{table}[!hbt]
    \centering
    \caption{Approximation error over 100 trajectories for the robust MPC and NN controller.}
    \begin{tabular}{|l|ccc|} \hline 
         & Mean & Max & Min \\ \hline
         Approx. errors & $0.1943$ & $1.3497$  & $7.5449\!\times\!10^{-6}$ \\ \hline
    \end{tabular}
    \label{supp:tab:errors}
\end{table}

\begin{figure*}[!ht]
  \centering
  \begingroup
  \def\AdmissibleTikzDataDir{figures/tikz/tikz_data}
  % Generated by export_admissible_tikz_drone.py.
% Define \AdmissibleTikzDataDir before \input when needed.
\providecommand{\AdmissibleTikzDataDir}{tikz_data}
\definecolor{rmpcblue}{HTML}{0072B2}
\definecolor{nnorange}{HTML}{D55E00}
\definecolor{outsidegray}{HTML}{C9C9C9}
\definecolor{violationred}{HTML}{C62828}
\definecolor{zoomone}{HTML}{007C91}
\definecolor{zoomtwo}{HTML}{A23B72}
\definecolor{zoomthree}{HTML}{4C7A34}
\definecolor{zoomfour}{HTML}{6B5CA5}
\pgfplotsset{
  colormap={admissiblemap}{
    color(0cm)=(outsidegray);
    color(1cm)=(white)
  },
  admissible main axis/.style={
    width=6.7cm,
    height=5.7cm,
    scale only axis,
    axis on top,
    clip mode=individual,
    tick label style={font=\footnotesize},
    label style={font=\small},
    legend style={
      at={(0.5,1.025)},
      anchor=south,
      draw=none,
      fill=none,
      font=\scriptsize,
      legend columns=3
    },
    legend image post style={opacity=1,line width=0.75pt},
    scaled ticks=false
  },
  admissible zoom axis/.style={
    width=2.2cm,
    height=2.2cm,
    scale only axis,
    axis equal image,
    axis on top,
    clip mode=individual,
    tick label style={font=\tiny},
    label style={font=\scriptsize},
    title style={font=\scriptsize,align=center},
    scaled ticks=false,
    xticklabel style={/pgf/number format/fixed,/pgf/number format/precision=1},
    yticklabel style={/pgf/number format/fixed,/pgf/number format/precision=1},
    max space between ticks=28
  },
  admissible boundary/.style={
    color=black!65,
    line width=0.42pt,
    unbounded coords=jump
  },
  rmpc trajectories/.style={
    color=rmpcblue,
    opacity=0.55,
    line width=0.50pt,
    unbounded coords=jump
  },
  nn trajectories/.style={
    color=nnorange,
    opacity=0.62,
    dashed,
    line width=0.50pt,
    unbounded coords=jump
  },
  rmpc zoom/.style={
    color=rmpcblue,
    mark=o,
    mark size=1.1pt,
    line width=0.65pt
  },
  nn zoom/.style={
    color=nnorange,
    dashed,
    mark=square,
    mark options={fill=white},
    mark size=1.0pt,
    line width=0.65pt
  },
  violation marks/.style={
    only marks,
    color=violationred,
    mark=x,
    mark size=2.3pt
  },
  violation rings/.style={
    only marks,
    color=violationred,
    mark=o,
    mark options={fill=none},
    mark size=2.3pt
  }
}
\tikzset{
  zoom number/.style={
    circle,
    fill=#1,
    text=white,
    font=\bfseries\tiny,
    inner sep=1.1pt
  }
}
\begin{tikzpicture}
\begin{axis}[
  admissible main axis,
  name=mainDrone,
  at={(0cm,0cm)},
  anchor=south west,
  xmin=-20,
  xmax=40,
  ymin=-25,
  ymax=25,
  xlabel={$x_1\;[\mathrm{m}]$},
  ylabel={$x_2\;[\mathrm{m\,s^{-1}}]$}
]
  \begin{scope}
  \clip (axis cs:-20,-25) rectangle (axis cs:40,25);
  \input{\AdmissibleTikzDataDir/drone_background.tikz}
  \end{scope}
  \addplot[admissible boundary,forget plot] table[x=x,y=y,col sep=comma] {\AdmissibleTikzDataDir/drone_boundary.csv};
  \addplot[rmpc trajectories] table[x=x,y=y,col sep=comma] {\AdmissibleTikzDataDir/drone_rmpc.csv};
  \addlegendentry{RMPC}
  \addplot[nn trajectories] table[x=x,y=y,col sep=comma] {\AdmissibleTikzDataDir/drone_nn.csv};
  \addlegendentry{NN}
  \addplot[only marks,mark=star*,mark size=2.5pt,black] coordinates {(25,0)};
  \addlegendentry{reference}
  \draw[zoomone,line width=0.65pt] (axis cs:-4,11.3724193585) rectangle (axis cs:9.79424730819,25.1666666667);
  \node[zoom number=zoomone] at (axis cs:-3.6,17.3333333333) {1};
  \draw[zoomtwo,line width=0.65pt] (axis cs:30.4566771726,-16.7385711682) rectangle (axis cs:39.2,-7.99524834074);
  \node[zoom number=zoomtwo] at (axis cs:38.8,-10) {2};
  \draw[zoomthree,line width=0.65pt] (axis cs:28.7387778565,-20.3333333333) rectangle (axis cs:39.4915984889,-9.58051270094);
  \node[zoom number=zoomthree] at (axis cs:38.8,-20) {3};
  \draw[zoomfour,line width=0.65pt] (axis cs:8.16172535361,7.71154353833) rectangle (axis cs:25.616848482,25.1666666667);
  \node[zoom number=zoomfour] at (axis cs:12,25) {4};
\end{axis}
\begin{axis}[
  admissible zoom axis,
  name=zoomDrone1,
  at={(7.85cm,3.50cm)},
  anchor=south west,
  xmin=-4,
  xmax=9.79424730819,
  ymin=11.3724193585,
  ymax=25.1666666667,
  xtick={-4,2.8971236541,9.79424730819},
  ytick={11.3724193585,18.2695430126,25.1666666667},
  xlabel={$x_1$},
  ylabel={$x_2$},
  axis line style={color=zoomone,line width=0.55pt},
  title style={text=zoomone},
  title={1: min $x_1$}
]
  \begin{scope}
  \clip (axis cs:-4,11.3724193585) rectangle (axis cs:9.79424730819,25.1666666667);
  \input{\AdmissibleTikzDataDir/drone_background.tikz}
  \end{scope}
  \addplot[admissible boundary,forget plot] table[x=x,y=y,col sep=comma] {\AdmissibleTikzDataDir/drone_boundary.csv};
  \addplot[rmpc zoom] table[x=mpc_x,y=mpc_y,col sep=comma] {\AdmissibleTikzDataDir/drone_zoom_1.csv};
  \addplot[nn zoom] table[x=nn_x,y=nn_y,col sep=comma] {\AdmissibleTikzDataDir/drone_zoom_1.csv};
  \addplot[only marks,mark=x,mark size=2.6pt,black,line width=0.7pt] coordinates {(-3.6,17.3333333333)};
\end{axis}
\begin{axis}[
  admissible zoom axis,
  name=zoomDrone2,
  at={(11.35cm,3.50cm)},
  anchor=south west,
  xmin=30.4566771726,
  xmax=39.2,
  ymin=-16.7385711682,
  ymax=-7.99524834074,
  xtick={30.4566771726,34.8283385863,39.2},
  ytick={-16.7385711682,-12.3669097545,-7.99524834074},
  xlabel={$x_1$},
  ylabel={$x_2$},
  axis line style={color=zoomtwo,line width=0.55pt},
  title style={text=zoomtwo},
  title={2: max $x_1$}
]
  \begin{scope}
  \clip (axis cs:30.4566771726,-16.7385711682) rectangle (axis cs:39.2,-7.99524834074);
  \input{\AdmissibleTikzDataDir/drone_background.tikz}
  \end{scope}
  \addplot[admissible boundary,forget plot] table[x=x,y=y,col sep=comma] {\AdmissibleTikzDataDir/drone_boundary.csv};
  \addplot[rmpc zoom] table[x=mpc_x,y=mpc_y,col sep=comma] {\AdmissibleTikzDataDir/drone_zoom_2.csv};
  \addplot[nn zoom] table[x=nn_x,y=nn_y,col sep=comma] {\AdmissibleTikzDataDir/drone_zoom_2.csv};
  \addplot[only marks,mark=x,mark size=2.6pt,black,line width=0.7pt] coordinates {(38.8,-10)};
\end{axis}
\begin{axis}[
  admissible zoom axis,
  name=zoomDrone3,
  at={(7.85cm,0.00cm)},
  anchor=south west,
  xmin=28.7387778565,
  xmax=39.4915984889,
  ymin=-20.3333333333,
  ymax=-9.58051270094,
  xtick={28.7387778565,34.1151881727,39.4915984889},
  ytick={-20.3333333333,-14.9569230171,-9.58051270094},
  xlabel={$x_1$},
  ylabel={$x_2$},
  axis line style={color=zoomthree,line width=0.55pt},
  title style={text=zoomthree},
  title={3: min $x_2$}
]
  \begin{scope}
  \clip (axis cs:28.7387778565,-20.3333333333) rectangle (axis cs:39.4915984889,-9.58051270094);
  \input{\AdmissibleTikzDataDir/drone_background.tikz}
  \end{scope}
  \addplot[admissible boundary,forget plot] table[x=x,y=y,col sep=comma] {\AdmissibleTikzDataDir/drone_boundary.csv};
  \addplot[rmpc zoom] table[x=mpc_x,y=mpc_y,col sep=comma] {\AdmissibleTikzDataDir/drone_zoom_3.csv};
  \addplot[nn zoom] table[x=nn_x,y=nn_y,col sep=comma] {\AdmissibleTikzDataDir/drone_zoom_3.csv};
  \addplot[only marks,mark=x,mark size=2.6pt,black,line width=0.7pt] coordinates {(38.8,-20)};
\end{axis}
\begin{axis}[
  admissible zoom axis,
  name=zoomDrone4,
  at={(11.35cm,0.00cm)},
  anchor=south west,
  xmin=8.16172535361,
  xmax=25.616848482,
  ymin=7.71154353833,
  ymax=25.1666666667,
  xtick={8.16172535361,16.8892869178,25.616848482},
  ytick={7.71154353833,16.4391051025,25.1666666667},
  xlabel={$x_1$},
  ylabel={$x_2$},
  axis line style={color=zoomfour,line width=0.55pt},
  title style={text=zoomfour},
  title={4: max $x_2$}
]
  \begin{scope}
  \clip (axis cs:8.16172535361,7.71154353833) rectangle (axis cs:25.616848482,25.1666666667);
  \input{\AdmissibleTikzDataDir/drone_background.tikz}
  \end{scope}
  \addplot[admissible boundary,forget plot] table[x=x,y=y,col sep=comma] {\AdmissibleTikzDataDir/drone_boundary.csv};
  \addplot[rmpc zoom] table[x=mpc_x,y=mpc_y,col sep=comma] {\AdmissibleTikzDataDir/drone_zoom_4.csv};
  \addplot[nn zoom] table[x=nn_x,y=nn_y,col sep=comma] {\AdmissibleTikzDataDir/drone_zoom_4.csv};
  \addplot[only marks,mark=x,mark size=2.6pt,black,line width=0.7pt] coordinates {(12,25)};
\end{axis}
\end{tikzpicture}
  \endgroup
   \caption{100 closed-loop state trajectories of the robust MPC and the NN controllers in state space with initial states in $\hat{\mathcal{X}}_N$,  shown in white. The four insets provide enlarged views of selected trajectories for clarity.}
  \label{supp:fig:admissible-trajectories}
\end{figure*}

\section{Weight-projected neural network training}
\label{supp:sec:weight-projection}

The Lipschitz constant of the feed-forward neural network introduced
in Section~\ref{subsec:ffnn} of the main article can be bounded using the Lipschitz
constants of its individual layers. Let \(L_{\sigma_j}\) denote a
Lipschitz constant of the activation function \(\sigma_j\), and let
\(A_j\) denote the weight matrix of the \(j\)-th affine layer. Using
the submultiplicativity of induced matrix norms and the composition of
the network layers, we obtain the analytical upper bound
\begin{equation}
    \hat L_{\mathrm{NN}}(\theta)
    =
    \lVert A_m\rVert_2
    \prod_{j=1}^{m-1}
    \left(
        L_{\sigma_j}\lVert A_j\rVert_2
    \right),
    \label{supp:eq:nn-lipschitz-bound}
\end{equation}
where \(\theta\) denotes the collection of all network parameters and
\(m\) is the number of weight matrices. Fixed linear input or output
transformations contribute additional induced-norm factors to
\eqref{supp:eq:nn-lipschitz-bound}. Equivalently, they can be incorporated
into the first or last affine map when evaluating the bound. Since the
bias vectors \(b_j\) are constant additional factors and therefore do not affect differences between network outputs, they do not contribute to
\(\hat L_{\mathrm{NN}}\)
\cite{gouk_regularisation_2021}.

To enforce a prescribed bound \(L_{\mathrm{NN}}>0\),
the network is trained using projected gradient-based optimization.
Starting from the parameters \(\theta^{(k)}\), an unconstrained
optimizer step with respect to the supervised loss
\(\mathcal L\) first produces tentative parameters
\(\tilde\theta^{(k+1)}\). These parameters are subsequently mapped
back to the feasible parameter set
\begin{equation}
    \Theta_L
    :=
    \left\{
        \theta
        \,\middle|\,
        \widehat L_{\mathrm{NN}}(\theta)
        \leq L_{\mathrm{NN}}
    \right\}.
\end{equation}

We employ a uniform radial rescaling of all \(m\) weight matrices. The
scaling factor is defined as
\begin{equation}
    \gamma^{(k+1)}
    =
    \begin{cases}
        \displaystyle
        \left(
            \frac{L_{\mathrm{NN}}}
            {\hat L_{\mathrm{NN}}
                \bigl(\tilde\theta^{(k+1)}\bigr)}
        \right)^{1/m},
        &
        \hat L_{\mathrm{NN}}
            \bigl(\widetilde\theta^{(k+1)}\bigr)
        >
        L_{\mathrm{NN}},
        \\[4mm]
        1,
        &
        \text{otherwise}.
    \end{cases}
    \label{supp:eq:nn-projection-factor}
\end{equation}
The updated parameters are then given by
\begin{align}
    A_j^{(k+1)}
    &=
    \gamma^{(k+1)}
    \tilde A_j^{(k+1)},
    && j=1,\ldots,m,
    \label{supp:eq:nn-radial-projection}\\
    b_j^{(k+1)}
    &=
    \tilde b_j^{(k+1)},
    && j=1,\ldots,m.
\end{align}
Because induced matrix norms are positively homogeneous, the resulting
bound satisfies
\begin{equation}
    \hat L_{\mathrm{NN}}\bigl(\theta^{(k+1)}\bigr)
    =
    \bigl(\gamma^{(k+1)}\bigr)^m
    \hat L_{\mathrm{NN}}
        \bigl(\tilde\theta^{(k+1)}\bigr)
    \leq
    L_{\mathrm{NN}}.
\end{equation}
Thus, every completed optimization step yields parameters that satisfy
the prescribed analytical Lipschitz bound~\cite{gouk_regularisation_2021}.

A soft penalty on violations of
\(L_{\mathrm{NN}}\) may additionally be included in the
training objective to reduce the magnitude and frequency of the
rescaling steps. Related projection-based approaches are discussed in
greater detail in \cite{gouk_regularisation_2021}.

\section{Existence of a neural network with prescribed finite-dataset accuracy}
\label{supp:sec:finite-dataset-approximation}

This section provides the complete proof of Lemma~\ref{lem:nn_interpolates_finite} in the main article.
Specifically, we show that, for any finite dataset with pairwise distinct
inputs and any prescribed tolerance \(\epsilon_{\mathcal D}>0\), there
exists a feed-forward neural network whose approximation error at every
data point is at most \(\epsilon_{\mathcal D}\). The result concerns
existence and finite-dataset accuracy only. This statement does not imply that a
particular training algorithm finds such a network or that the resulting
network satisfies an approximation bound outside the training dataset.

\begin{lem}(Existence of an $\epsilon_{\mathcal D}$-interpolating NN.)\label{supp:lem:nn_interpolates_finite}
Let $\mathcal{D} = \{ (x_i, y_i) \in \mathbb{R}^d \times \mathbb{R}^p \mid i \in \{1, \ldots, n_{\mathcal{D}} \}\}$ be a dataset with pairwise distinct inputs \(x_i\) and $\epsilon_{\mathcal D}>0$ a given constant. Then there exists a feedforward neural network 
$\tilde f:\mathbb{R}^d\to\mathbb{R}^p$ with any continuous non-polynomial activation function such that $\|\tilde f(x_i)-y_i\|\le \epsilon_{\mathcal D}$ for all $i=1,\dots,n_{\mathcal{D}}$.
\end{lem}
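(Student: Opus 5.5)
The plan is to reduce the statement to the universal approximation theorem for single-hidden-layer networks. For continuous non-polynomial activations, the classical result of Leshno, Lin, Pinkus and Schocken, extending \cite{hornik1991,hornik1993}, says that such networks are dense in $C(K,\mathbb{R})$ with respect to the uniform norm on any compact $K\subset\mathbb{R}^d$. The dataset is finite with pairwise distinct inputs, so we will first build a continuous function that interpolates it exactly. We then approximate this function uniformly on a compact set containing the inputs, which yields the prescribed accuracy at the data points.

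\textbf{Step 1 (continuous interpolant).} Since the $x_i$ are pairwise distinct, $r:=\tfrac12\min_{i\neq j}\|x_i-x_j\|>0$. We define bump functions $\phi_i(x):=\max\{0,1-\|x-x_i\|/r\}$, which are continuous with $\phi_i(x_j)=\delta_{ij}$. We then set $g:=\sum_{i=1}^{n_{\mathcal D}} y_i\phi_i:\mathbb{R}^d\to\mathbb{R}^p$, so that $g$ is continuous and $g(x_i)=y_i$ for all $i$. Alternatively, any polynomial Lagrange-type interpolant would do.

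\textbf{Step 2 (componentwise uniform approximation).} We take $K:=\mathrm{conv}\{x_1,\dots,x_{n_{\mathcal D}}\}$, or any closed ball containing all $x_i$; this set is compact. Since all $p$-norms on $\mathbb{R}^p$ are equivalent, there is a constant $c_p$ with $\|v\|\le c_p\|v\|_\infty$. For each output component $g_\ell$, $\ell=1,\dots,p$, the density theorem provides a one-hidden-layer network $\tilde f_\ell(x)=\sum_{s}a_{\ell s}\sigma(w_{\ell s}^\top x+b_{\ell s})$ with $\sup_{x\in K}|\tilde f_\ell(x)-g_\ell(x)|\le \epsilon_{\mathcal D}/c_p$.

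\textbf{Step 3 (assembly).} We stack the $p$ scalar networks into a single feedforward network $\tilde f$ of the form \eqref{eq:nndesign}. Its hidden layer is the concatenation of all hidden units, its first weight matrix stacks the $w_{\ell s}^\top$, and its output matrix is block-structured with entries $a_{\ell s}$. Then for every data point $\|\tilde f(x_i)-y_i\|=\|\tilde f(x_i)-g(x_i)\|\le c_p\max_\ell|\tilde f_\ell(x_i)-g_\ell(x_i)|\le\epsilon_{\mathcal D}$.

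The main obstacle is only bibliographic: the density theorem for arbitrary continuous non-polynomial activations is Leshno et al., not Hornik's original versions, which assume boundedness or sigmoidality. We will cite it as such. If one wants to avoid that result, we would instead invoke Pinkus's exact finite-interpolation result, where $n_{\mathcal D}$ hidden units suffice to interpolate exactly. We would then only remark that $\epsilon_{\mathcal D}>0$ makes the statement weaker.
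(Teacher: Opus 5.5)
Your proof is correct and follows the same route as the paper: build a continuous interpolant of the data, approximate it uniformly on a compact set containing the inputs via universal approximation, and pass to the chosen norm by norm equivalence. The differences are cosmetic. Your explicit hat-function interpolant replaces the paper's appeal to the Tietze extension theorem, and you make the componentwise stacking into a single network explicit. Your citation of Leshno et al.\ for arbitrary continuous non-polynomial activations is actually more accurate than the paper's reliance on \cite{hornik1991}, which assumes bounded activations.
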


\begin{pf}
The proof consists of two steps.
First, we show that there exists a continuous function that interpolates the training data.
Second, we apply the universal approximation theorem to conclude the existence of a neural network that approximates this interpolant with $\epsilon_\mathcal{D}$ accuracy.
As a result, this neural network satisfies $\|\tilde f(x_i)-y_i\|\le \epsilon_{\mathcal D}$ point-wise on the training data, which proves the claim.

\textbf{Step 1:} 
Consider the set $\mathcal{S}=\{x_i \mid i \in \{1, \ldots, n_{\mathcal{D}} \} \}  \subset\mathbb{R}^d$ consisting of the training inputs. 
Since $\mathcal{S}$ is finite, it is closed.
Now, define the function $g:S\to\mathbb{R}^p$ by $g(x_i)=y_i$ for all $i=1, \ldots, n_\mathcal{D}$. 
We now aim to construct a continuous extension of the function $g$ from the set $\mathcal{S}$ to the entire space $\mathbb{R}^d$, yielding the desired interpolant.
To this end, we consider the component functions $g_j:S\to\mathbb{R}$, such that $g(x) = (g_1(x), \ldots, g_p(x))$, and each component satisfies $g_j(x_i)=y_i^{(j)}$, where $y_i=(y_i^{(1)},\dots,y_i^{(p)})$.
As $\mathbb{R}^d$, equipped with any $p$-norm, is a normal topological space and $S$ is closed, the Tietze extension theorem \cite[Theorem~12.4]{jameson1974} guarantees that each real-valued component function $g_j$ can be continuously extended from $\mathcal{S}$ to $\mathbb{R}^d$.
We denote the continuous extension of $g_j$ by $q_j$, and define the continuous interpolant $q:\mathbb{R}^d\to\mathbb{R}^p$ of the training data by $q(x) = (q_1(x), \ldots, q_p(x))$.

\textbf{Step 2:}
We consider a compact set $\mathcal{K}\subset\mathbb{R}^d$ of interest such that $\mathcal{S}\subset \mathcal{K}$.
By the universal approximation theorem \cite{hornik1991}, for any $\delta'>0$, there exists a neural network $\tilde{f}:\mathbb{R}^d\to\mathbb{R}^p$ such that $\sup_{x \in \mathcal{K}} \| \tilde{f}(x)-q(x) \|_\infty < \delta'$. 
The result can be extended to arbitrary $p$-norms with $1\leq p < \infty$ by using norm equivalences in finite-dimensional real spaces.
Specifically, for any $1 \leq p < \infty$, there exists a constant $c_p \geq 1$ such that $\|v\|_p \le c_p\|v\|_\infty$ for all $v\in\mathbb{R}^p$, and trivially, $c_p=1$ if $p=\infty$.
% Let $\|\cdot\|$ be the norm used in the statement on $\mathbb{R}^p$.
% Since norms are equivalent in finite dimensions, there is $c_p\ge1$ with
% $\|v\|\le c_p\|v\|_\infty$ for all $v\in\mathbb{R}^p$.
Thus, and taking $\delta'=\epsilon_{\mathcal D}/c_p$, we obtain
$\sup_{x\ in K}\|\tilde f(x)-q(x)\|<\epsilon_{\mathcal D}$, and in particular,
$\|\tilde f(x_i)-y_i\|\le\epsilon_{\mathcal D}$ for all $i$, which concludes the proof.
\end{pf}

%%%%%%%%%%%%%%%%%%%%%%%%%%%%%%%%%%%%%%%%%%%%%%%%%%%%%%%%%%%%%%%%%%%%%%%%%%%%%%%%%%%%%%%%%%%%%%%%%%%

\endgroup

%%%%%%%%%%%%%%%%%%%%%%%%%%%%%%%%%%%%%%%%%%%%%%%%%%%%%%%%%%%%%%%%%%%%%%%%%%%%%%%%
% Single shared bibliography for article and supplement.
%%%%%%%%%%%%%%%%%%%%%%%%%%%%%%%%%%%%%%%%%%%%%%%%%%%%%%%%%%%%%%%%%%%%%%%%%%%%%%%%
\bibliographystyle{plain}
\bibliography{literature}

\end{document}